\documentclass[11pt, letterpaper]{article}

\usepackage[margin=1in]{geometry}
\usepackage[left]{lineno}

\usepackage[T1]{fontenc}
\usepackage{lmodern}
\usepackage{amsmath, amssymb, amsthm, dsfont, mathtools}
\usepackage{thmtools}
\usepackage{thm-restate}
\usepackage{xspace}
\usepackage{graphicx}
\usepackage[dvipsnames]{xcolor}
\usepackage{tikz}
\usetikzlibrary{decorations.pathmorphing, arrows.meta}
\usepackage{subcaption}
\usepackage{nicefrac}
\usepackage{bm}
\usepackage{thm-restate}
\usepackage{enumitem}
\usepackage{comment}

\title{Randomization Helps in Online Graph Exploration: \\ Breaking the Deterministic Lower Bound on Cycles}
\author{
  Júlia Baligács\thanks{University of Oxford. Email: \texttt{jbaligacs@gmail.com}. Funded by the European Union through the European Research Council under the project BOBR (grant agreement No.~948057) during employment in Warsaw and under the project CCOO (grant agreement No.~101165139) during employment in Oxford. Views and opinions expressed are however those of the authors only and do not necessarily reflect those of the European Union or the European Research Council. Neither the European Union nor the granting authority can be held responsible for them.} 
  \and Jan Hązła\thanks{AIMS Rwanda. Email: \texttt{jan.hazla@aims.ac.rw}.
  Supported by the Alexander von Humboldt Foundation German research chair
funding and the associated DAAD project No. 57761435.} 
  \and Lena Volk\thanks{Technische Universität Darmstadt. Email: \texttt{volk@mathematik.tu-darmstadt.de}.}
}
\date{}

\newcommand{\alg}{\ensuremath{\textsc{Alg}}\xspace}
\newcommand{\opt}{\ensuremath{\textsc{Opt}}\xspace}

\newcommand{\dist}{\ensuremath{\textsc{Dist}}\xspace}
\newcommand{\simpledet}{\ensuremath{\textsc{HeavyTest}}\xspace}
\newcommand{\distrand}{\ensuremath{\textsc{RandHeavyTest}}\xspace}
\newcommand{\pdir}{p_{\mathrm{dir}}}
\newcommand{\Tstop}{T_{\mathrm{stop}}}
\newcommand{\N}{\mathbb{N}}
\newcommand{\R}{\mathbb{R}}
\DeclareMathOperator*{\EE}{\mathbb{E}}
\renewcommand{\epsilon}{\varepsilon}
\newcommand{\phistop}{\Phi_{\Tstop}}

\let\oldtextbf\textbf\renewcommand{\textbf}[1]{\oldtextbf{\boldmath #1}} 

\theoremstyle{definition}
\newtheorem{definition}{Definition}

\theoremstyle{plain}
\newtheorem{theorem}[definition]{Theorem}
\newtheorem{corollary}[definition]{Corollary}
\newtheorem{lemma}[definition]{Lemma}

\newtheorem{observation}[definition]{Observation}

\newtheorem{claim}[definition]{Claim}

\usepackage{hyperref}
\usepackage[nameinlink]{cleveref}

\crefformat{lemma}{#2Lemma~#1#3}
\crefformat{theorem}{#2Theorem~#1#3}
\crefformat{claim}{#2Claim~#1#3}
\crefformat{proposition}{#2Proposition~#1#3}
\crefformat{equation}{#2(#1)#3}
\crefformat{corollary}{#2Corollary~#1#3}
\crefformat{definition}{#2Definition~#1#3}
\crefformat{figure}{#2Figure~#1#3}
\crefformat{observation}{#2Observation~#1#3}
\crefformat{section}{#2Section~#1#3}
\crefformat{question}{#2Question~#1#3}

\definecolor{LinkViolet}{hsb}{0.37,1,0.5}
\hypersetup{
  colorlinks=true,
  linkcolor=Violet,
  citecolor=Violet,
  urlcolor=green!50!black
}
\tikzset{
  vertex/.style={circle, draw, fill=white, minimum size=4mm, inner sep=0pt,
                 inner sep=0pt, font=\small},
  cur/.style={circle, draw, fill=black!18, minimum size=4.5mm,
                 inner sep=0pt, font=\small},
  epath/.style={thick, decorate,
                 decoration={snake, amplitude=0.5mm, segment length=3mm,
                             pre length=2mm, post length=2mm}},
  cost/.style={fill=white, inner sep=1.5pt, font=\small}
}

\begin{document}

\maketitle
\thispagestyle{empty}

\begin{abstract}
In online graph exploration, introduced by Kalyanasundaram and Pruhs (1994), an agent must visit all vertices of an initially unknown weighted graph and return to its starting position, while the graph is revealed only locally at visited vertices. Although the problem has attracted considerable attention, previous work has focused exclusively on deterministic algorithms. Randomized strategies are often substantially harder to analyze because of a fundamental challenge inherent to exploration. In this work, we give the first positive result showing that randomization can improve competitive guarantees in online graph exploration. To this end, we focus on cycles, a simple graph class which nevertheless captures a key difficulty of online exploration.

Our main contribution is $\textsc{RandHeavyTest}$, a randomized algorithm for online exploration of cycles whose competitive ratio we prove to be at most \(1.315\). This establishes a strict separation from the deterministic setting, where the optimal competitive ratio is \(\thickapprox 1.366\), and thus gives the first provable advantage of randomization in online graph exploration. A key step towards this result is a new, simplified optimal deterministic algorithm, \(\textsc{HeavyTest}\), whose formulation naturally suggests the randomized variant. We complement our upper bounds with lower bounds of \(1.115\) for arbitrary randomized algorithms and \(1.207\) for the natural class of so-called forward-greedy algorithms, which includes \(\textsc{RandHeavyTest}\).
\end{abstract}



\section{Introduction}

In the classical online graph exploration problem, introduced by Kalyanasundaram and Pruhs in~1994~\cite{Kalyanasundaram94}, a single agent must explore an undirected, connected graph~$G=(V,E)$ with non-negative edge weights $w \colon E \to \R_{\geq 0}$.
The agent starts at some vertex $s\in V$ and initially has no knowledge of the graph. Whenever it visits a vertex for the first time, it learns the identifiers of all adjacent vertices together with the weights of the corresponding incident edges. The agent may then decide which known edge to traverse next. The cost of traversing an edge is simply its weight. The goal is to visit all vertices and return to $s$, subject to minimizing the total cost.

As usual, the quality of an exploration algorithm is measured in terms of competitive analysis.
For a deterministic algorithm $\alg$, let $\alg(G,s)$ denote the total cost incurred by $\alg$ on graph $G$ when started at vertex $s$.
Let $\opt(G)$ denote the offline optimum cost, that is, the minimum length of a closed walk in $G$ that visits every vertex.
Note that $\opt(G)$ is independent of the starting vertex, whereas the cost of an online algorithm may depend on it.
A deterministic algorithm $\alg$ is \emph{{$\rho$-competitive}} if  $\alg(G,s) \leq \rho \cdot \opt(G)$ for every connected graph $G=(V,E)$ and every starting vertex~$s\in V$.
The infimum over all such values $\rho$ is the \emph{competitive ratio} of~\alg.\footnote{
This is the \emph{strict} notion of competitiveness. The variant allowing an additive constant, $\alg(G,s) \leq \rho\cdot\opt(G)+C$, is known to yield the same infimum in online graph exploration~\cite{Baligacs26}. We therefore omit the qualifier ``strict''.
}

Arguably the most compelling question in online graph exploration is whether there exists a constant-competitive algorithm.
Equivalently, does the metric traveling salesperson problem admit a constant-factor approximation when the algorithm has no prior information about the graph?
Despite the naturalness of the question and considerable efforts to solve it~(see~\cref{sec:related_work} for a more detailed discussion, or~\cite[Chapter 2]{BaligacsThesis} for a survey), the problem has remained wide open since it was posed in~1994~\cite{Kalyanasundaram94}. The best known general upper bound on the competitive ratio is $O(\log n)$~\cite{Rosenkrantz77}, where $n$ denotes the number of vertices, while the best known lower bound is~$4$~\cite{Baligacs26}. The question has so far mainly been approached by identifying graph classes on which a constant competitive ratio is possible~\cite{Kalyanasundaram94,Megow2012,BaligacsDHS26} and by finding the exact competitive ratio in such classes~\cite{miyazaki,Unicyclic,TadpoleGraphs,Fritsch21}.

In this work, we initiate the study of randomized algorithms for online graph exploration.
A randomized algorithm $\alg$ is \emph{$\rho$-competitive} if~$\EE[\alg(G,s)] \leq \rho \cdot \opt(G)$ for every connected graph~$G$ and every starting vertex $s$.
In this formula, $G,s$ and $\opt(G)$ are fixed, whereas~$\alg(G,s)$ is a random variable.
Since randomization is a central paradigm in online algorithms, it is natural to ask whether it can improve the competitive guarantees in this setting.
This question was posed explicitly for cycles in~\cite{miyazaki}.
Yet, despite extensive work on exploration, no such improvement has been proved so far.
The same is true for several closely related and well-studied variants, including exploration with a team of agents~\cite{Cosson24, CossonMas24, disserlower, dereniowski, AkkerBF24} and exploration of directed graphs.
For the latter, randomized algorithms have been mentioned as a desirable direction~\cite{FleischerT05}; however, existing results concern lower bounds~\cite{DengPapadimitriou1999, FoersterW16} or experimental studies~\cite{FleischerT03experiments}, and no improved competitive guarantee is~known.

We believe that this absence of results is not accidental but reflects a difficulty inherent to exploration problems.
In many classical online problems (such as $k$-server, ski rental, secretary problem, online bipartite matching, online bin packing), the information available to the algorithm at a given time is independent of its earlier random choices (under the standard model of a so-called oblivious adversary).
In exploration this breaks down.
The part of the graph that has been revealed depends on the agent's previous trajectory, and hence on its previous random choices.
Thus, randomization affects not only the algorithm's decisions, but also the information
on which they are based.
This makes even seemingly simple randomized strategies notoriously difficult to analyze. Moreover, it is not a priori clear whether randomization can improve on the optimal deterministic competitive ratio for any natural class of graphs at all.

Our contribution is to develop techniques that handle this difficulty in the case of cycles.
Using these techniques, we prove that randomization beats the best possible deterministic competitive guarantee for this class. This gives the first separation between deterministic and randomized competitive guarantees in the Kalyanasundaram--Pruhs model. Thus, randomization is beneficial already on one of the simplest natural graph classes.

Observe that trees are trivial to explore with depth-first-search being 1-competitive. In case of cycles, if a cycle contains an edge that is more expensive than all other edges combined, an optimal offline tour avoids it and traverses the rest of the cycle twice. If there is no such edge, an optimal offline tour traverses the whole cycle once. However, in the online setting, whether an edge has this ``heavy'' property cannot be inferred when it is first discovered.  Therefore, cycles constitute a simple case which nevertheless captures a key difficulty of exploration and results in  a nontrivial problem.

Even in the deterministic case, settling the competitive ratio for online exploration of cycles
was far from immediate. The first progress was made in~\cite{AsahiroMMY10}, where the authors adapted the classical Nearest Neighbor algorithm~\cite{Rosenkrantz77} specifically to cycles, proving an upper bound of~$3/2$ and a lower bound of $5/4$ on the competitive ratio.
Miyazaki, Morimoto, and Okabe~\cite{miyazaki} later established a tight bound of $(1+\sqrt{3})/2 \approx 1.366$, thus closing the problem in the deterministic setting.
As mentioned before, the authors raised the question of whether randomization can improve on this bound. Although this was almost two decades ago, the question has remained open.
In this work, we provide a positive answer.

\subsection{Our results}

Our first contribution is a new deterministic algorithm for cycle exploration, \simpledet, that attains the optimal deterministic competitive ratio. Its main appeal lies in its simplicity. Unlike the previously known optimal algorithm $\textsc{Dist}$~\cite{miyazaki}, \simpledet does not base its decisions on the cost incurred so far, and its analysis is substantially shorter.

\begin{restatable}{theorem}{thmsimpledet}
\label{thm:simpledet}
The deterministic algorithm \simpledet has competitive ratio $\frac{\sqrt{3}+1}{2}\thickapprox 1.366$.
\end{restatable}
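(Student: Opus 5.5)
The lower bound needs no new work. Miyazaki, Morimoto and Okabe~\cite{miyazaki} showed that every deterministic algorithm for cycle exploration has competitive ratio at least $\rho:=\frac{\sqrt3+1}{2}$. So it suffices to prove the upper bound $\simpledet(G,s)\le\rho\cdot\opt(G)$ for every cycle $G$ and start $s$. One caveat: the definition of \simpledet is not given in the excerpt above, so the case split below assumes it has the form described in the next paragraph and should be adapted to the actual definition.

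I would first fix notation for the state of the exploration. At any time the explored part is a path $P$ of total weight $\ell$ containing $s$. It has two frontier edges: $a$, at the endpoint where the agent currently is, and $b$, at the other endpoint. I assume \simpledet decides only from $\ell$ and the weights of the frontier edges, and not from the cost spent so far. Concretely, I assume it traverses the next edge in its current direction unless that edge is heavy relative to the known structure, meaning its weight is at least a threshold times $\ell$ plus the other frontier edge. When it meets such an edge, it turns around, walks back through $P$, and explores from the other side.

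Write $W$ for the total weight of the cycle and $e^\ast$ for a maximum-weight edge. Then $\opt=\min\{W,\,2(W-w(e^\ast))\}$. The analysis is a case distinction on how many times the algorithm turns.
\begin{itemize}[leftmargin=*]
\item \textbf{No turn.} The algorithm walks around the whole cycle at cost $W$. Every edge it accepted passed the test, so either no edge is heavy and $W=\opt$, or the heavy edge is only moderately larger than $W-w(e^\ast)$. In the latter case the threshold bounds $W/(2(W-w(e^\ast)))$ by $\rho$.
\item \textbf{One turn, then exploration completes from the other side.} The cost is at most $2\ell_1+(\text{remaining path})+\text{return}$, where $\ell_1$ is the explored length at the turn. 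I would compare this with both expressions for \opt: if the rejected edge really is $e^\ast$ and is heavy, the rejection was correct; otherwise the test guarantees $w(a)\ge c\,\ell_1$, and this charges the detour $2\ell_1$ against $W$.
\item \textbf{Repeated turns.} I would show that turning back again requires the newly met edge to satisfy a test that makes the earlier detours geometrically small relative to \opt. A potential-type argument, with explored length compared to the current frontier weights, then reduces this case to the one-turn worst case.
\end{itemize}

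In each case the ratio becomes a rational function of the threshold and of one or two free weight parameters. Maximizing over the adversary's parameters should give two competing bounds, one decreasing and one increasing in the threshold. Balancing them should yield the quadratic $2\rho^2-2\rho-1=0$, whose positive root is $\rho=\frac{1+\sqrt3}{2}$, and this determines the threshold.

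I expect the main obstacle to be the repeated-turn case. There the adversary can place moderately heavy edges on both sides so that the agent oscillates, and one must show that the accumulated back-and-forth cost never exceeds $\rho\cdot\opt$. My first attempt would be an invariant maintained after each turn: cost incurred so far is at most $\rho$ times a lower bound on \opt computable from the current state, for instance $\min\{\ell+w(a)+w(b),\,2\ell\}$. I would then check that one step of \simpledet preserves this invariant, which is exactly where the threshold's value is used.
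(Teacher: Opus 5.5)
Citing \cite{miyazaki} for the lower bound is fine. The upper-bound plan, however, has two genuine gaps.

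\textbf{The rule you analyze is not \simpledet, and the difference matters.} By \cref{def:simpledet}, \simpledet moves directly iff $b\le\sqrt3\,a+d$. Here $b$ is the next edge, $a$ is the explored distance from the agent back to $s$, and $d$ is the distance from $s$ to the far end of the other frontier edge, including that edge. The stretch between $s$ and the agent gets weight $\sqrt3$ and the stretch beyond $s$ only weight $1$. So the decision depends on where $s$ lies in the explored path, not only on $\ell$. A rule of your form (turn iff the next edge is at least $c\ell$ plus the other frontier edge) is not $\rho$-competitive for any $c$:
\begin{itemize}
\item \emph{Case $c>1$.} Put tiny edges of total length $1$ on one side of $s$ and an edge of weight $2\epsilon$ on the other. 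The agent walks the tiny side, meets an edge of weight $c+2\epsilon$ and turns. It crosses the $2\epsilon$-edge and then meets an edge of weight $2c$ that closes the cycle, which your rule crosses. The cost is $\approx 3+3c$, while $\opt\approx 2+2c$ because the $2c$-edge is heavy. The ratio is $3/2$.
\item \emph{Case $c\le 1$.} Use the same start, turn at weight $\approx c$, and follow with a light path of length $1+c$ that closes the cycle. The ratio is $1+\frac{1}{1+c}\ge 3/2$.
\end{itemize}
In the first instance \simpledet refuses the $2c$-edge, since there $a\approx 0$ and $d\approx 1+c$.

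\textbf{The crux, arbitrarily many turns, is left as a hope.} You do not establish that earlier detours become ``geometrically small''. The invariant you suggest, $C\le\rho\cdot\min\{\ell+(\text{frontier weights}),\,2\ell\}$, is of the wrong type. It reserves nothing for the cost still to come: the return $\min(a,d)$ if there is no heavy edge, or $l+d$ resp.\ $a+2d+2l$ once the heavy edge is found. So it cannot deliver the final bound. Moreover, $2\ell$ is not a lower bound on $\opt$ once a heavy edge has been crossed.

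The missing idea is the invariant of \cref{lem:cstar-bound-gen}: $C-a\le\frac{2}{1+\sqrt3}\,d$ until the last vertex is visited. It follows by induction. A direct move changes neither $C-a$ nor $d$. A backtrack occurs only if $b>\sqrt3\,a+d$, and it raises $C-a$ by $2a$ and $d$ by $a+b-d>(1+\sqrt3)a$. This is your one-turn charging, accumulated over all turns.

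With this invariant no split by number of turns is needed; you examine a single moment.
\begin{itemize}
\item \emph{No heavy edge.} At the end, $\frac{C+\min(a,d)}{a+d}\le\frac{a+\frac{2}{1+\sqrt3}d+\min(a,d)}{a+d}\le 1+\frac{1}{1+\sqrt3}$, with the worst case at $a=d$.
\item \emph{Heavy edge.} Look at the moment it is discovered; afterwards the agent never turns again. Backtracking costs $C+a+2d+2l$ against $\opt=2(a+d+l)$, giving ratio at most $1+\frac{1}{1+\sqrt3}$. Moving directly costs $C+b+l+d$ with $b\le\sqrt3\,a+d$, giving ratio at most $\max\{1+\frac{1}{1+\sqrt3},\frac{1+\sqrt3}{2}\}$.
\end{itemize}
Both bounds equal $\frac{1+\sqrt3}{2}$, which is what pins down the constant $\sqrt3$.
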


The formulation of \simpledet naturally suggests a randomized variant. From this starting point, we define the randomized algorithm $\distrand_\alpha$ parametrized by a value $\alpha>0$.
Our main result is the following upper bound on its competitive ratio for $\alpha=\nicefrac{1}{2}$.

\begin{restatable}{theorem}{thmdistrand}
\label{thm:distrand}
The randomized algorithm \(\distrand_{0.5}\) has competitive ratio at most\linebreak
\({1+\frac{\left(3-\sqrt{2}\right)^2}{8} \thickapprox 1.314}.\)
\end{restatable}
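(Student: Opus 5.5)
The plan is a potential-function argument with a worst-case reduction, parametrized by the random threshold of \distrand. I expect \distrand$_\alpha$ to behave like \simpledet: it walks forward greedily along the cycle, and whenever the next unexplored edge in the current direction is \emph{heavy} relative to what it has already paid, it tests the other side. The difference is that the heaviness threshold, or the amount of backward testing, is drawn at random with a distribution governed by $\alpha$. The first step is to pin down the state of the exploration at any moment. It consists of the explored arc, the agent's position on it, the two frontier edges $e_1,e_2$ with weights $w_1,w_2$, and the random draw. The key observation I would exploit is that the random draw is made only once, or afresh at each heavy-edge test, and that the revealed information is a deterministic function of the graph and the draw. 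For a fixed draw the trajectory is therefore that of a deterministic threshold algorithm. This turns the adaptivity problem described in the introduction into averaging over a one-parameter family of deterministic runs.

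Second, I would reduce the adversary to a small family of instances. The argument is an exchange argument on the cycle. Edges that the algorithm traverses without ever triggering a test can be merged into a single edge without changing the run for any draw, and without decreasing the ratio. There are two offline cases. If the cycle has no edge heavier than all others combined, then $\opt(G)=w(G)$. Otherwise $\opt(G)=2(w(G)-w_{\max})$. In either case the worst instances should consist of a few forward-greedy stretches and one critical heavy edge $h$, whose weight lies just on one side of the test threshold. For each draw $t$ I would write $\alg_t(G,s)$ explicitly as a piecewise-linear function of the normalized parameters: the length $x$ of the explored arc when $h$ is encountered, the weight of $h$, and the remaining weight $y$ behind it. Integrating over $t$ then gives $\EE[\alg(G,s)]$ in closed form.

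Third, I would maximize $\EE[\alg]/\opt$ over the instance parameters for $\alpha=\nicefrac12$. The adversary wants $w(h)$ either just below the threshold, so the algorithm traverses $h$ when it should not have, or just above, so it pays a wasted test. Randomizing the threshold means each such choice is penalized only with some probability. The worst case should balance the two failure modes, and a calculus optimization over $x$ and $w(h)$ should give the value $1+\frac{(3-\sqrt2)^2}{8}$. The $\sqrt2$ presumably comes from optimizing a quadratic integral at $\alpha=\nicefrac12$.

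The main obstacle is justifying the reduction in the second step. The algorithm may perform several tests, switching direction repeatedly, so later thresholds depend on earlier random outcomes and on the cost incurred so far. To handle this, I would first try an amortized potential $\Phi$: the expected future cost, bounded by a linear function of the explored weight and the current position. I would then show inductively that each phase between tests keeps $\EE[\text{cost so far}]+\Phi\le\rho\cdot\opt$ for $\rho=1+\frac{(3-\sqrt2)^2}{8}$. Under this invariant the single-heavy-edge instance is the only tight case.
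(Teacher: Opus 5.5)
Your first three steps rest on a misreading of $\distrand_{0.5}$, and they fail. There is no random threshold. At \emph{every} state with $b>1.5a+d$ the algorithm flips a fresh coin and moves directly with probability $\pdir(a,b,d)=\frac{a}{2(b-a-d)}$. Its rule compares $b$ with $1.5a+d$, not with the cost paid so far. A run can involve arbitrarily many such coins, and each one changes what is revealed next. So there is no one-parameter family of deterministic runs to average over, and the adaptivity problem is not removed.

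The merging step is also false, because the decision depends on the whole weight of the next edge. With $a=1$, $d=0$, two consecutive edges of weight $1.5$ are both traversed with probability $1$. A single edge of weight $3$ is traversed only with probability $\frac{1}{4}$. Hence the reduction to ``forward stretches plus one critical heavy edge'' is unsupported. It also ignores cycles without a heavy edge, on which the algorithm may backtrack many times and accumulate surplus cost $C-a$; controlling that surplus is the main job of a potential.

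Finally, the constant is not the value of an extremal instance. In the paper, $r=\frac{(3-\sqrt2)^2}{8}$ is the root of $\frac{7}{4}-2r-2\sqrt r=0$, where an AM--GM estimate in the drift computation becomes tight. The exact competitive ratio of $\distrand_{0.5}$ is left open, so there is no reason your calculus optimization would land on this number.

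Your fallback, an amortized potential maintaining $\EE[\text{cost so far}]+\Phi\le\rho\cdot\opt$, is the right kind of argument, but it lacks the ingredient that makes it work. Mid-run, $\opt$ is not determined by the revealed state, so this invariant cannot be checked step by step. Moreover, as the paper stresses, the step-by-step part of such an argument holds verbatim against an adaptive adversary. Against that adversary randomization cannot beat $\frac{1+\sqrt3}{2}$, so the gain must come from where $\opt$ enters.

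The paper proceeds as follows.
\begin{itemize}
\item It takes $\Phi=C-a-g(a,d)$ with $g(a,d)=\max\bigl(2rd,\frac{(4r-1)a+d}{2}\bigr)$, which never mentions $\opt$, and checks $\Phi_1\le 0$.
\item It shows non-positive expected drift before $\Tstop$. Direct moves only decrease $\Phi$ since $g$ is increasing in $a$. At randomized states the drift is an explicit expression in $a,b,d$, bounded by a one-variable maximization when $a\le d$ and by AM--GM when $a>d$.
\item It stops the process at $\Tstop$, when the heavy edge is discovered or the last vertex is visited. This is a stopping time for the fixed cycle that the agent itself cannot recognize.
\item Only at $\Tstop$ is $\opt$ used, via $\EE[\alg(\Gamma)\mid\mathcal F_{\Tstop}]\le(1+r)\opt(\Gamma)+\Phi_{\Tstop}$. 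This uses two facts: a forward-greedy algorithm is deterministic after $\Tstop$, and in the heavy-edge case $\pdir\cdot(b-a-d)\le\frac{a}{2}$. The second caps the expected loss from crossing the heavy edge independently of $b$; this is exactly where randomization helps.
\item Optional stopping then gives $\EE[\Phi_{\Tstop}]\le 0$.
\end{itemize}

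Note also that a linear function would not suffice even in this framework. For $g=\lambda a+\mu d$, the drift condition at $d=0$, $b=\sqrt3\,a$ forces $\lambda+\mu\ge\sqrt3-1$. The terminal condition at $a=d$ with no heavy edge forces $\lambda+\mu\le 2r$. Together these certify nothing below $\frac{1+\sqrt3}{2}$, so the kink of $g$ at $a=d$ is essential.
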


This establishes a strict separation between deterministic and randomized algorithms for online cycle exploration, since \(\distrand_{0.5}\) achieves a competitive ratio strictly below the deterministic optimum of 1.366.
This is the first result showing that randomization helps for online graph exploration (already on the class of cycles).

Throughout this introduction, we use the standard definition of the competitive ratio for randomized online algorithms against an \emph{oblivious adversary}.
In this model, the cycle is fixed independently of the random choices made by the algorithm. Our upper bound remains valid also against a stronger adversary that is allowed to adapt in a limited way to the outcomes of these random decisions.

Next, we complement this upper bound with a lower bound for arbitrary randomized algorithms.

\begin{restatable}{theorem}{thmlowerbound}
\label{thm:lowerbound}
Every randomized algorithm for online graph exploration on cycles has competitive ratio at least~\(1+\frac{\sqrt{2}}{8+3\sqrt{2}}\thickapprox 1.116.\)
\end{restatable}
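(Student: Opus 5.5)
We will use Yao's principle. It suffices to exhibit a probability distribution over cycle instances with a fixed start vertex $s$ such that every deterministic algorithm has expected cost at least $\rho\cdot\opt$, where $\rho = 1+\frac{\sqrt2}{8+3\sqrt2}$. If $\opt$ is the same on every instance in the support, this directly bounds the ratio of expectations. Otherwise we normalize each instance so that $\opt=1$, which is harmless because cost is homogeneous in edge weights.

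The distribution will be built around the key ambiguity described in the introduction: whether an edge is ``heavy'' cannot be determined when it is first seen. Concretely, $s$ will have two incident edges, and the cycle will be drawn so that exploring along one side reveals a long path of cheap edges, which at some point reaches an edge of weight $y$. Depending on a random choice, this edge is either the last edge closing the cycle back toward the other side of $s$, or it is followed by more vertices. The two options are:
\begin{itemize}[nosep]
\item the heavy edge is truly heavy, so $\opt$ goes around it twice over the cheap part;
\item the remaining part of the cycle, reachable from the other side of $s$, is short, so $\opt$ uses the whole cycle.
\end{itemize}
Since randomization helps, one symmetric trap is not enough. We need a mixture of two or three instance types, with the position of the trap hidden by a uniformly random orientation at $s$, so that the algorithm's first step is also a guess.

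The analysis proceeds as follows. Every deterministic algorithm is reduced to a small number of canonical strategies. By a dominance argument, at each newly revealed decision point (an edge of weight $y$ versus backtracking over the known cheap path) only the moment of turning back matters. We then parametrize the strategies by the threshold at which the algorithm commits, for example the distance $x$ explored along the cheap path before returning, or whether it crosses the $y$-edge. For each canonical strategy we compute the expected cost over the distribution as a piecewise-linear function of $x$ and the instance parameters. Finally, we choose the probabilities $p$ and the weights $y,\epsilon$ so that the minimum over strategies of the expected ratio is maximized. The optimum should equalize the two or three best responses, and that equalization is what yields the constant $1+\frac{\sqrt2}{8+3\sqrt2}$, whose $\sqrt2$ suggests a single quadratic equalization condition.

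The main obstacle is the reduction to canonical strategies. Deterministic algorithms may zigzag, partially return, and interleave progress on both sides of $s$. We must argue that, conditioned on the information revealed so far (the cheap edges look identical across instances until the distinguishing edge is seen), any zigzagging is wasted cost. The algorithm's cost can then be lower-bounded by that of a monotone ``explore one side up to some point, then switch'' strategy. We first try an exchange argument that charges each reversal to at least twice the backtracked distance. The remaining step, optimizing the parameters, is routine calculus.
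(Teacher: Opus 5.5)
Your proposal is a plan rather than a proof, and the part that carries all the content is missing. You never fix the instances, the probabilities or the weights, and you never compute an expected cost. Yet you assert that equalizing best responses will produce exactly $1+\frac{\sqrt2}{8+3\sqrt2}$. That constant is the value of one specific optimization, and nothing suggests that a ``long cheap path with a turning-back threshold'' family yields it.

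The paper's construction is much smaller: the three 4-vertex cycles $\Gamma_1=(1,0,x,1)$, $\Gamma_2=(1,x,0,1)$, $\Gamma_3=(1,x,x,1)$ with $x=2+2\sqrt2$, drawn with probabilities $p,p,1-2p$ where $p=\frac{2}{8+3\sqrt2}$. All three look identical from $s$. The symmetric pair $\Gamma_1,\Gamma_2$ (your random-orientation idea) makes the first direction irrelevant. Say the agent goes towards the $x$-edge of $\Gamma_1$. Then $\Gamma_2$ is charged only $\alg\ge\opt$, while on $\Gamma_1$ and $\Gamma_3$ the agent sees an $x$-edge and faces a single binary choice:
\begin{itemize}
\item crossing it costs at least $2+x$ on $\Gamma_1$, where $\opt=4$;
\item backtracking costs at least $4+2x$ on $\Gamma_3$, where $\opt=2+2x$.
\end{itemize}
Hence $\EE[\alg(\Gamma)/\opt(\Gamma)]\ge 1+\min\{\frac{1-2p}{1+x},\frac{p(x-2)}{4}\}$. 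Equalizing the two terms and then minimizing $(1+x)+\frac{8}{x-2}$ over $x$ gives $x=2+2\sqrt2$ and the value $\frac{1}{3+4\sqrt2}=\frac{\sqrt2}{8+3\sqrt2}$. With only one decision point there is nothing to reduce. The obstacle you flag and leave open (collapsing zigzagging strategies to canonical ones) is created by your long cheap path. That path also lets the algorithm hedge by turning back partway, which changes the optimization itself.

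Two steps in what you did write would also fail as stated.

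First, your second instance type is inverted. A short unseen remainder makes the $y$-edge \emph{heavy}, not non-heavy. For $\opt$ to traverse the whole cycle, the remainder must be heavy enough that $y$ is at most the total weight of all other edges. In the paper, $\Gamma_3$ achieves this with a second $x$-edge.

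Second, normalizing each instance to $\opt=1$ is not harmless. Instances with different $\opt$ are scaled by different factors, so the two edge weights visible at $s$ differ. This destroys the indistinguishability the whole argument rests on: for the paper's three cycles, a deterministic algorithm could then be optimal on every instance. Instead, apply Yao's principle to $\EE_\Gamma[\alg(\Gamma)/\opt(\Gamma)]$ on the unscaled instances, as the paper does. This is legitimate for the strict competitive ratio: averaging $\EE[\alg(\Gamma)]\le\rho\,\opt(\Gamma)$ over the input distribution and exchanging the order of expectation produces a deterministic algorithm with expected ratio at most $\rho$.
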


Last, we provide a stronger lower bound for a natural class of algorithms called \emph{forward-greedy} that our algorithm \(\distrand_\alpha\) falls into.

\begin{restatable}{theorem}{thmlowerboundspecific}
\label{thm:lowerbound-specific}
Every forward-greedy randomized algorithm for online graph exploration on cycles has competitive ratio at least $\frac{1+\sqrt{2}}{2}\thickapprox 1.207$.
\end{restatable}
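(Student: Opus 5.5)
We use Yao's minimax principle. A randomized forward-greedy algorithm is a probability distribution over deterministic forward-greedy algorithms. So it suffices to exhibit one probability distribution $\mathcal D$ over weighted cycles with a start vertex $s$ such that every deterministic forward-greedy algorithm satisfies $\EE_{(G,s)\sim\mathcal D}[\alg(G,s)/\opt(G)]\ge \frac{1+\sqrt2}{2}-\epsilon$. Because the competitive ratio is a ratio, we normalize all instances in the support to have $\opt(G)=1$; Yao's principle then applies directly to expected cost. We assume the paper's definition of forward-greedy: as long as the agent keeps exploring in its current direction, it always moves to the nearest unexplored vertex on that side. Its only freedom is \emph{when} to abandon the current direction and explore the other side instead. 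We also use that this choice can only depend on what has been revealed, namely the weights of the explored edges and the two frontier edges.

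The hard instance generalizes the classical two-scenario deterministic lower bound. From $s$, one side is a chain of tiny edges leading to a frontier edge $e$ of weight $x$. The other side starts with an edge of moderate weight. The two instances in the support look identical up to the moment the agent has to decide whether to cross $e$, and differ only beyond that point.
\begin{itemize}
\item In scenario $A$, which has probability $p$, $e$ is heavy: the rest of the cycle is cheap, so $\opt$ avoids $e$. An algorithm that crosses $e$ pays roughly $x$ extra.
\item In scenario $B$, which has probability $1-p$, $e$ is cheap relative to a long unseen remainder, so $\opt$ traverses the cycle once. An algorithm that turns back before $e$ re-traverses the explored part, which is a factor of order $2$ on that part.
\end{itemize}
Forward-greedy algorithms cannot hedge by partially traversing $e$, nor by probing small detours. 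So every deterministic forward-greedy algorithm is, on this instance family, essentially one of two pure strategies, ``cross'' or ``turn''. There is one further degree of freedom: a possible second decision point on the other side. We handle it by adding one more layer, a symmetric heavy-edge gadget on the other side, whose parameters are chosen so that turning again is never better.

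Next we compute the cost ratio of each pure strategy in each scenario as a function of $x$ and of the moderate weight on the other side, with $\opt=1$. We choose $p$ so that the adversary equalizes the two strategies, and then optimize $x$. With the natural parametrization (cheap part of weight $1/2$, heavy edge weight $x$), the equalized value becomes a one-variable function. We expect it to be maximized at a $\sqrt2$-type point, giving $\frac{1+\sqrt2}{2}$. The $\epsilon$ slack comes from the tiny edges and is then sent to $0$.

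The main obstacle is justifying the reduction of an arbitrary deterministic forward-greedy algorithm to finitely many pure strategies on $\mathcal D$. We must verify that forward-greediness really forbids any intermediate behavior that could exploit partial information. Examples are turning at a point chosen depending on accumulated cost, or interleaving the two sides many times. The plan is to show by an exchange argument that each additional direction change on these instances only increases cost in both scenarios. After that, the remaining work is the explicit two-by-two minimax computation.
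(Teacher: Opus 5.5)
Your proposal has a genuine gap, and it starts with the definition of forward-greedy. The property you assume (``as long as the agent keeps exploring in its current direction it moves to the nearest unexplored vertex; its only freedom is when to turn'') holds for \emph{every} algorithm on a cycle, since at each step any agent either traverses the direct boundary edge or backtracks. The paper's definition has real content: the agent must move directly, with probability $1$, whenever $b\le a+d$. Your argument never uses this. If it worked, it would therefore give $\frac{1+\sqrt2}{2}$ against all randomized algorithms, well beyond the paper's general bound of about $1.116$.

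The two-scenario idea in fact fails without the property. Take the paper's instances $\Gamma_1=(\epsilon^2,\dots,\epsilon^2,1+\sqrt2,0,\epsilon)$ and $\Gamma_2=(\epsilon^2,\dots,\epsilon^2,1+\sqrt2,2+\sqrt2,\epsilon)$, each with $1/\epsilon^2$ edges of weight $\epsilon^2$. An unrestricted agent can walk a few chain edges, backtrack across the $\epsilon$-edge at cost $O(\epsilon)$, and see whether the next edge has weight $0$ or $2+\sqrt2$. It then finishes on both cycles with ratio $1+O(\epsilon)$. So your sentence that the instances ``look identical up to the moment the agent has to decide whether to cross $e$'' is exactly what needs proof, and no exchange argument can supply it for general algorithms.

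With the correct definition this step is immediate. Along the chain $b=\epsilon^2\le a+d$, so the agent is forced to walk to $v_i$ without visiting $u_i$. It arrives in the state $a=1$, $b=1+\sqrt2$, $d=\epsilon$, where a single crossing probability $p$ is common to both cycles. No gadget for a ``second decision point'' is needed either. After this decision, $2+\sqrt2+\epsilon$ (crossing in $\Gamma_1$) and $6+2\sqrt2+\epsilon$ (turning in $\Gamma_2$) are lower bounds on the total cost for any continuation.

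Two further pieces are missing.

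First, forward-greediness does not constrain the very first move. A deterministic algorithm that starts on the other side sees where the scenarios differ before ever facing $e$, so your cross/turn dichotomy does not cover it. The paper handles this with a separate claim. Suppose the algorithm starts on the $\epsilon$-edge with probability $q$. On the cycle $(\epsilon^2,0,\epsilon)$, which looks the same from $s$, its expected ratio is then at least $q/(2\epsilon)$, so any $2$-competitive algorithm has $q\le 4\epsilon$. In your Yao formulation you would add this cycle to the distribution with weight, e.g., $\sqrt{\epsilon}$.

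Second, the quantitative core is absent: you give no weights and only expect a $\sqrt2$-type optimum. Also, the other side should begin with an $\epsilon$-edge, not a moderate one. Normalize the chain to length $1$ and let the other side be an edge $m$ followed by a $0$-edge. The excess ratio for crossing in the heavy scenario is then $(x-1-m)/(2+2m)$, which only decreases in $m$. With $m\to0$ and $x=1+\sqrt2$, the excess ratios tend to $p/\sqrt2$ on $\Gamma_1$ and $(1-p)/(2+\sqrt2)$ on $\Gamma_2$. These balance at $p=1/(2+\sqrt2)$ and give $\frac{1+\sqrt2}{2}$. Your Yao version works identically with weight $1/(2+\sqrt2)$ on $\Gamma_1$.
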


\subsection{Related work}
\label{sec:related_work}

A basic strategy for online graph exploration is the greedy \emph{Nearest Neighbor} algorithm, whose competitive ratio is $\Theta(\log n)$~\cite{Rosenkrantz77}, where the lower bound already holds on trees~\cite{Fritsch21} and on unweighted ladder graphs~\cite{HougardyW15}.
The \emph{Hierarchical Depth-First Search} algorithm is constant-competitive on graphs with a bounded number of distinct edge weights and has competitive ratio~$\Theta(\log n)$ on general graphs~\cite{Megow2012}. The algorithm $\textsc{Blocking}$ is constant-competitive on planar graphs~\cite{Kalyanasundaram94, Megow2012}; this was generalized first to bounded-genus graphs~\cite{Megow2012} and later to all graph classes excluding a fixed minor~\cite{BaligacsDHS26}. For general graphs, however, $\textsc{Blocking}$ does not improve upon the best known $\mathcal{O}(\log n)$ upper bound~\cite{Megow2012}, and there is currently no candidate algorithm conjectured to break the logarithmic barrier.
Lower bounds for arbitrary algorithms started at 2~\cite{miyazaki} and were successively improved~\cite{DobrevKM12,BirxDHK21} up to 4~\cite{Baligacs26}.

For more restricted graph classes, apart from cycles, the deterministic competitive ratio is known to be 2 on unweighted graphs~\cite{miyazaki} and on tadpole graphs~\cite{TadpoleGraphs}.
For unicyclic and cactus graphs, the best known upper bounds are 2.5~\cite{Unicyclic} and 3.91~\cite{Fritsch21}, respectively.
Another line of work studies online graph exploration with predictions~\cite{EberleLMNS22, Gehnen26}.

There are two main variants of directed graph exploration: Visiting all vertices in a weighted digraph, and traversing all edges in an unweighted digraph.
In the vertex-exploration model, the deterministic competitive ratio is exactly $n-1$ and the randomized ratio is lower bounded by $n/4$~\cite{FoersterW16}.
In the edge-exploration model, bounds are usually expressed in terms of the Eulerian deficiency $d$, the minimum number of edges that must be added to make the graph Eulerian.
The best known lower bounds are $\Omega(d)$ for deterministic and $\Omega(d/\log d)$ for randomized algorithms~\cite{DengPapadimitriou1999}, while the best known upper bound is $\mathcal{O}(d^8)$~\cite{FleischerT05}.
In both settings, all upper bounds are attained by deterministic algorithms.

In collaborative graph exploration, the objective is typically to minimize exploration time for unit-speed agents rather than total traveled distance.
Even unweighted trees remain challenging and are not fully understood for all team sizes \cite{Cosson24, CossonMas24, dereniowski, disserlower}. Cycles have also been studied and already exhibit nontrivial behavior~\cite{AkkerBF24}.

Finally, randomization is known to help in related online navigation problems, but these typically assume that the underlying metric space or graph is known and only obstacles are revealed online. Although such models also couple the available information with the agent's decisions, their techniques do not seem to transfer directly to the exploration settings discussed above. Examples include reaching a target in Euclidean space with unknown obstacles revealed only upon encounter~\cite{BermanBFKRS96, BlumRaghavanSchieber1997}, and the Canadian traveler problem ($k$-CTP), where an agent must reach a target in a known graph with at most $k$ unknown blocked edges~\cite{DemaineHLS21, BenderW15}.

\subsection{Outline and overview of techniques}

In \cref{sec:algorithms}, we first formally define the deterministic algorithm \simpledet, and then its natural randomization $\distrand_{\alpha}$.
The proof that \simpledet attains the optimal deterministic competitive ratio (\cref{thm:simpledet}) is deferred to \cref{sec:simpledet}.

Our main contribution is developed in \cref{sec:distrand}, where we prove the improved competitive ratio for randomized exploration of cycles (\cref{thm:distrand}).
The proof is based on a potential function argument, that is, we define an auxiliary process that combines the cost incurred so far by $\distrand_{0.5}$ with a carefully chosen potential of the current exploration state. The main technical step is to prove that this process has non-positive expected drift conditioned on the current state, and hence is a supermartingale.
We stop the process at a bounded stopping time~$\Tstop$, after which the algorithm makes at most one further randomized decision. The optional stopping theorem then gives a bound on the expected potential at time $\Tstop$. Finally, we bound the remaining expected cost of the algorithm, conditioned on the state at $\Tstop$, in terms of this potential. Together, these estimates yield the desired upper bound on the expected cost, and hence the claimed competitive ratio.

In \cref{sec:strategy}, we present this strategy in a formal way and identify the conditions on the potential function that suffice to prove \cref{thm:distrand}. In that section, we also discuss two notable features of the analysis. First, the argument remains valid against a stronger adversary that may adapt the cycle in a limited way to the outcomes of random decisions. Second, it is somewhat surprising that a potential depending only on information available to the algorithm can yield a better-than-deterministic bound. The stopping time $\Tstop$ is the key ingredient that makes this possible. In \cref{sec:potentialfunction}, we define the potential function and prove that it satisfies the conditions established in \cref{sec:strategy}. This is the technically most involved part of the paper.

Finally, in \cref{sec:lowerbounds}, we prove our lower bounds, Theorems~\ref{thm:lowerbound} and~\ref{thm:lowerbound-specific}. We proceed by an application of Yao's principle~\cite[Section 8.3]{yaosprinciple} to a small family of carefully chosen cycles.

\paragraph{Comment on AI usage.}
During the proof-discovery process of \cref{thm:distrand}, the authors used the large language model GPT-5.5 Pro. The tool was used in an iterative exchange to refine approaches for analyzing the algorithm $\distrand_{\alpha}$. In particular, it contributed to the identification of the potential function presented in \cref{sec:potentialfunction}, especially the function $g$ in it.


\section{The algorithms \simpledet and \distrand}
\label{sec:algorithms}

Before defining the algorithms, we fix the basic terminology used to describe the process of online graph exploration of cycles.
A \emph{boundary edge} is an edge that has one explored and one unexplored endpoint, in particular, its weight is already known to the agent, but the edge itself has not yet been traversed. The process of exploring a graph online can be thought of as follows: 
in each step, the agent chooses a boundary edge, travels through the explored part of the graph to its explored endpoint, traverses the edge, and then learns the information revealed at the newly visited vertex. For cycles, the explored part of the graph is always a path until the entire graph is known. Consequently, there are exactly two boundary edges, one at each end of this path. Moreover, after each exploration step, the agent is located at the explored endpoint of one of these two boundary edges. We call this the \emph{direct boundary edge}, and we call the other the \emph{backtracking boundary edge}. If the agent chooses to traverse the latter, we say that it \emph{backtracks}, and otherwise, we say that it moves~\emph{directly}.

In our algorithms, we assume that the first edge chosen is always the lighter of the two edges visible to the agent from the starting position; a tie is broken arbitrarily (deterministically). Once the last vertex has been visited, the agent has complete information and returns to the starting vertex via a shortest path.
In fact, observe that the agent obtains complete information one step earlier, when it discovers the last edge. At this time, there are two boundary edges known to have the same unexplored endpoint. However, we do not use this observation in our algorithms and simply assume that the agent is not aware at this time that the endpoints coincide.

At all other times, i.e., from the second step until all vertices are explored,
our algorithms make a decision based on the following three values (cf.~\cref{fig:explore-mid}). If the agent is currently located at $v$, let~$b$ denote the weight of the direct boundary edge, let $a$ denote the distance from $v$ to the starting position $s$, and let $d$ denote the distance from $s$ to the unexplored endpoint of the backtracking boundary edge.
With this notation, the cost of a direct move is $b$, whereas the cost of backtracking is $a+d$.
Upon visiting the last vertex, we define $a$ and $d$ to be the lengths of the two paths to $s$ (cf.~\cref{fig:explore-done}). It can be useful to think of this as a state in mid-exploration, where $b=0$ and the two boundary edges are connected by an edge of weight 0.

We will index the values $a,b,d$ by time, i.e., we write $a_t, b_t, d_t$, where we assume that at time~$t$ the algorithm has explored $t+1$ vertices. Note that $a_t,b_t,d_t$ are well defined only starting from~$t=1$.

The deterministic algorithm \simpledet is now defined by a simple threshold rule.

\begin{figure}
  \centering
  \begin{subfigure}[b]{0.52\textwidth}
    \centering
    \begin{tikzpicture}[scale=0.7]
      \def\R{2.1}
      \draw[epath] (90:\R) arc (90:205:\R);    
      \draw[epath] (90:\R) arc (90:-28:\R);    
      \draw[|-|,thick] (90:2.75) arc (90:-60:2.75);
      \draw[dotted, thick] (250:\R) arc (250:300:\R);
      \node[cur] (s)   at (90:\R)  {$s$};
      \node[cur] (cur) at (200:\R) {$v$};
      \node[cur] (back) at (-20:\R) {};
      \node[vertex] (P)   at (250:\R) {};
      \node[vertex] (Q)   at (300:\R) {};
      \draw[thick, red] (cur) -- node[cost, auto, swap] {$b$} (P);
      \draw[thick, red] (back) to (Q);
      \node[cost] at (150:2.62) {$a$};
      \node[cost] at (5:2.75)  {$d$};
      \node[cost] at (275:2.5) {$l$};
      \node[font=\scriptsize] at (208:2.95) {};
    \end{tikzpicture}
    \caption{Mid-exploration: paths of lengths $a$ and~$d$, two boundary edges depicted in red including the direct edge of length $b$,
             and the invisible remainder of length~$l$.}
    \label{fig:explore-mid}
  \end{subfigure}
  \hfill
  \begin{subfigure}[b]{0.4\textwidth}
    \centering
    \begin{tikzpicture}[scale=0.7]
      \def\R{2.1}
      \draw[epath] (90:\R) arc (90:270:\R);    
      \draw[epath] (90:\R) arc (90:-90:\R);    
      \node[cur] (s) at (90:\R)  {$s$};
      \node[cur]    (w) at (270:\R) {$v$};
      \node[cost] at (180:2.62) {$a$};
      \node[cost] at (0:2.62)   {$d$};
      \node[font=\scriptsize] at (270:2.95) {};
    \end{tikzpicture}
    \caption{After the last vertex is visited: only the two known
             paths $a$ and $d$ remain.\\ \phantom{asdf}}
    \label{fig:explore-done}
  \end{subfigure}
  \caption{Snapshots during graph exploration. The agent starts at vertex \(s\) and its current position is $v$.
  The gray vertices and the curled paths are already explored.
  }
  \label{fig:explore}
\end{figure}

\begin{definition}[\simpledet]
\label{def:simpledet}
The algorithm \simpledet is defined by letting the agent move directly if and only if $b\leq \sqrt{3}\,a+d$.
\end{definition}

In contrast to the previously known optimal algorithm \textsc{Dist}~\cite{miyazaki}, whose decisions depend on the cost incurred so far, \simpledet is history-independent in the following sense: its next move is determined solely by the values of $a,b,d$, regardless of how the current state was reached. In particular, the decision depends only on the total lengths $a$ and $d$ of the explored paths, and not on their decomposition into edges. Although it may not be surprising that an optimal rule with this independence property exists, it is far from clear that it can have such a simple description.

Note that the condition $b\leq \sqrt{3}\,a+d$ is equivalent to requiring that either $b\le a+d$ or $\frac{a}{b-a-d}\cdot (\sqrt{3}-1)\ge 1$. This reformulation suggests a natural way to randomize the deterministic rule.

\begin{definition}[\distrand]
\label{def:distrand}
The algorithm $\distrand_\alpha$ is parametrized by $\alpha>0$ and is defined by letting the agent move directly with probability
\begin{align*}
\pdir(a,b,d)\coloneqq\begin{cases}
1&\text{if }b\le (\alpha+1)a+d,\\
\frac{a}{b-d-a}\cdot\alpha&\text{if }b>(\alpha+1)a+d.
\end{cases}
\end{align*}
\end{definition}

Observe that \simpledet can be viewed as the deterministic rule obtained from the randomized algorithm $\distrand_{\sqrt{3}-1}$ by backtracking whenever the randomized algorithm does so with positive probability.

Next, note that, for every $\alpha>0$, the algorithm $\distrand_{\alpha}$ moves directly with probability~1 whenever it reaches a state with $b\leq a+d$.
We call any algorithm with this property \emph{forward-greedy}.
Recall that the offline optimum avoids traversing an edge only if its length exceeds the sum of the lengths of all other edges.
Thus, in a state with $b\leq a+d$, the online algorithm knows that the $b$ edge
is traversed in an optimum offline solution.
Nevertheless, it is unclear whether an optimal randomized algorithm for cycle exploration should necessarily be forward-greedy.
Viewed as a game against the adversary, it may be advantageous for the algorithm to be less predictable.


\section{Analysis of \distrand}
\label{sec:distrand}

In this section, we prove our main result, which strictly separates randomized cycle exploration from the deterministic case. We begin by recalling the statement.

\thmdistrand*

\subsection{Proof strategy}
\label{sec:strategy}

First, let us settle some basic terminology.
Given a cycle $\Gamma$, let $e^*$ denote an edge of maximum weight, i.e., $e^*\in \mathrm{argmax}\{w(e):e\in \Gamma\}$. Observe that the cost of the offline optimum is then
\[\opt(\Gamma)=\min\left\{\sum_{e\in \Gamma} w(e), \; 2 \cdot \sum_{e\in \Gamma\setminus \{e^*\}} w(e) \right\}.\]
In case $w(e^*)> \sum_{e\in \Gamma\setminus \{e^*\}} w(e)$, i.e., $\opt(\Gamma)$ is the latter argument of the minimum, we say that~$e^*$ is a \emph{heavy edge}. Otherwise, we say that the cycle does not have a heavy edge.

Given a randomized algorithm and an input cycle $\Gamma$, we define the random variable $\Tstop$ to be the time step defined as follows.
\begin{itemize}
\item If $\Gamma$ does not have a heavy edge, $\Tstop$ is the time when the last vertex is visited.
\item If $\Gamma$ contains a heavy edge, $\Tstop$ is the time when the heavy edge is discovered.
\end{itemize}

Note that $\Tstop$ is a stopping time with respect to the filtration $(\mathcal{F}_t)_{t\ge 0}$ generated by the history of the algorithm up to time $t$. In other words, the event $\Tstop=t$ only depends on the process up to time $t$.

We next observe that time $\Tstop$ is the last step in which the behavior of $\distrand_{\alpha}$ may be non-deterministic.

\begin{observation}
\label{obs:tstop}
From time $\Tstop+1$ on, the behavior of a forward-greedy algorithm, in particular $\distrand_{\alpha}$, is deterministic.
\end{observation}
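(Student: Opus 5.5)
Two cases, depending on whether $\Gamma$ has a heavy edge, using only the definitions of $\Tstop$ and of forward-greediness. Note that the case distinction is a property of the fixed input cycle, so it does not depend on the random choices.

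\emph{No heavy edge.} Here $\Tstop$ is the time the last vertex is visited. After that, the agent has complete information and, by the convention in \cref{sec:algorithms}, returns to $s$ along a shortest path. This is a deterministic function of the (now fully known) cycle and the current position, with ties broken deterministically. So nothing random happens from time $\Tstop+1$ on.

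\emph{Heavy edge $e^*$.} Here $\Tstop$ is the time $e^*$ is discovered, i.e.\ when it first becomes a boundary edge. The key claim is that from then on, every decision point at which the agent stands at the explored endpoint of the direct boundary edge other than $e^*$ satisfies $b\le a+d$. Since $w(e^*)$ exceeds the total weight of all other edges, any other boundary edge has weight $b$ at most the weight of the remaining edges. I would then split into two sub-cases.

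\begin{itemize}
\item If the direct boundary edge is not $e^*$, then $e^*$ is the backtracking boundary edge, and so $e^*$ is not counted in $a+d$. We need $b\le a+d$. The route I'd try is to compare with $w(e^*)$: the quantity $a+d$ is the backtracking cost, which includes going around past $s$ to the far endpoint, but it does not include $e^*$ itself. I am not sure this inequality holds in general, and this is the main obstacle.
\item If the direct boundary edge is $e^*$, then $b=w(e^*)$ is compared against $a+d$, and the rule may randomize.
\end{itemize}

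So the naive argument seems to fail: the claim as I first stated it need not hold. Let me re-examine the second sub-case. The randomization at time $\Tstop$ is exactly the ``one further randomized decision'' that the overview in \cref{sec:distrand} mentions, so the statement presumably means that decisions strictly after that step are deterministic.

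The structural point I would then establish is that after time $\Tstop$, the agent never traverses $e^*$ until exploration ends. In the sub-case where $e^*$ is the backtracking edge, the other boundary edge $f$ is direct and $\mathrm{w}(f)\le \sum_{e\neq e^*}w(e)$. I'd argue inductively that each subsequent non-$e^*$ boundary edge the agent faces is direct with $b\le a+d$. For this I would use that the unexplored portion between the two boundary edges, plus the explored path, is at most the sum of the non-heavy edges, and that $d$ counts the explored side beyond $s$.

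If such an inequality cannot be guaranteed, the fallback is a different reading. Forward-greediness only forces direct moves when $b\le a+d$. The relevant fact is then that once $e^*$ is known and adjacent to the agent's side, the only randomized choice is whether to cross $e^*$. That choice is made at $\Tstop$ itself, and after it the configuration is either fully explored (deterministic return) or $e^*$ has become the backtracking edge with every direct edge satisfying $b\le a+d$. That last condition is the inequality I expect to be the crux, verified via $b+\text{(remaining)}\le w(e^*)$ bounds.
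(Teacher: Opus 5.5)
Your no-heavy-edge case matches the paper. The heavy-edge case has a genuine gap. You correctly identify the crux, namely $b\le a+d$ at every decision after $\Tstop$, but you never establish it. Your reason for doubting it rests on a misreading of $d$.

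By definition, $d$ is the distance from $s$ to the \emph{unexplored} endpoint of the backtracking boundary edge, so it contains that edge's weight. This is consistent with the backtracking cost $a+d$, which includes crossing that edge. Hence $a+d$ equals the total weight of the explored path plus the backtracking boundary edge, and $e^*$ \emph{is} counted in $a+d$ when it is the backtracking edge.

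After the decision at $\Tstop$, $e^*$ is no longer the direct boundary edge. Either the agent crossed it, so it lies on the explored path between the agent and $s$ and $a\ge w(e^*)$. Or the agent backtracked, so $e^*$ is the backtracking boundary edge and $d\ge w(e^*)$. In both cases $a+d\ge w(e^*)>\sum_{e\neq e^*}w(e)\ge b$. A forward-greedy agent therefore moves directly, which keeps $e^*$ off the direct side, and induction finishes the argument. This is exactly the paper's short proof.

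Your fallback route cannot work: it bounds $b$ via the unexplored remainder while keeping $e^*$ out of $a+d$. With $e^*$ excluded, $a+d$ can be tiny while $b$ is a large but non-heavy edge. For example, in the cycle $(\epsilon,10,100,\epsilon)$, after going to the $100$-edge and backtracking, the next direct edge has $b=10$. Without $w(e^*)$, you would have $a+d=2\epsilon$.

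Your case analysis is also incomplete. After crossing $e^*$ at $\Tstop$ the cycle need not be fully explored, so ``either fully explored or $e^*$ has become the backtracking edge'' is false. Likewise, ``the direct edge is not $e^*$, hence $e^*$ is the backtracking edge'' overlooks that $e^*$ may already have been traversed. Both missing cases are covered by the single observation that, once $e^*$ stops being the direct edge, it is counted in $a$ or in $d$.
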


\begin{proof}
In case there is no heavy edge, after time $\Tstop$, the algorithm simply returns via a shortest path to the starting position.
In case there is a heavy edge $e^*$, after the agent made a decision at time $\Tstop$, we never again encounter a state with $b>a+d$ because $b$ is the weight of an edge different from $e^*$ and we have $a+d\geq w(e^*)> b$.
Hence, the agent moves directly deterministically in every decision after~$\Tstop$ and returns to the starting position via a shortest path once all vertices are visited.    
\end{proof}

Observe that, in case the heavy edge is incident to the starting position, we have $\Tstop=0$, which may cause difficulties in our analysis, as $a,b,d$ are undefined before the first traversal. However, it is immediate that $\distrand_{\alpha}$ has competitive ratio 1 on such cycles, so we can ignore this case.
We call a cycle (together with a starting position) \emph{nontrivial} if it has at least 3 vertices and, in case it has a heavy edge, it is not incident to the starting position. In particular, $a,b,d$ are defined at time $\Tstop$ on nontrivial cycles.

The next lemma should be read mainly as a roadmap for the proof of \cref{thm:distrand}.
Although its proof is an immediate consequence of the optional stopping theorem, we include the details for completeness.
In the lemma and throughout the remainder of the paper, we simply denote the algorithm's total cost on a cycle by $\alg(\Gamma)$, omitting the starting vertex from the notation since it will always be clear from context.

\begin{lemma}
\label{lem:potential_function}
Let $\alg$ be a randomized forward-greedy algorithm for cycle exploration and $\rho\geq 1$.
Fix a nontrivial cycle $\Gamma$ and let
$(a_t,b_t,d_t,C_t)_{t\geq 1}$ denote the random process describing the values of $a,b,d$ at time $t$, and $C_t$ being the total cost incurred until time $t$.
Let $(\mathcal F_t)_{t\geq 0}$ be the filtration generated by the history of the algorithm up to time $t$. Assume there exists a function $\Phi \colon \R^3_{\ge 0} \to \R$ such that the process $\Phi_t := \Phi(a_t,d_t,C_t)$ satisfies the following properties.
\begin{enumerate}[label=(\alph*)]
\item $\Phi_1\leq 0$ with probability 1,
\label{cond:init}
\item $\EE[\alg(\Gamma) \mid \mathcal{F}_{\Tstop}] \leq \rho\cdot \opt(\Gamma) + \Phi_{\Tstop} $,
\label{cond:last_step}
\item \label{cond:supermartingale}
the stopped process $\Phi_t^*:=\Phi_{\min(t,\Tstop)}$ satisfies for every $t\geq 1$ 
\[
\EE[\Phi_{t+1}^* \mid \mathcal{F}_t]\leq \Phi_{t}^*.
\]
\end{enumerate}
Then, $\EE[\alg(\Gamma)]\leq \rho \cdot \opt(\Gamma)$.
\end{lemma}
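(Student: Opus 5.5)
The plan is to apply the optional stopping theorem to the stopped process $(\Phi^*_t)_{t\ge 1}$ and then combine it with condition~\ref{cond:last_step} via the tower property.

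\textbf{Boundedness of the stopping time.} On a nontrivial cycle $\Gamma$ with $n$ vertices, every exploration step visits a new vertex. Hence $\Tstop\le n-1$ deterministically, so $\Tstop$ is a bounded stopping time. Nontriviality gives $\Tstop\ge 1$, so $\Phi_{\Tstop}$ is well defined.

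\textbf{Integrability.} The process has only finitely many steps, and each randomized decision is a binary choice (move directly or backtrack). Therefore $(a_t,d_t,C_t)$ takes only finitely many values, and each $\Phi^*_t$ is a bounded random variable. In particular, all expectations are finite. I expect this to be the only point needing any care: it holds because the cycle is fixed and the algorithm chooses between two boundary edges at each step, even if the probabilities are arbitrary.

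\textbf{Optional stopping.} By condition~\ref{cond:supermartingale}, $\EE[\Phi^*_{t+1}]\le \EE[\Phi^*_t]$ for all $t\ge1$ (take expectations). Iterating up to $t=n-1$, where $\Phi^*_{n-1}=\Phi_{\Tstop}$, gives
\[
\EE[\Phi_{\Tstop}]=\EE[\Phi^*_{n-1}]\le \EE[\Phi^*_1]=\EE[\Phi_1]\le 0 ,
\]
using $\Tstop\ge 1$ and condition~\ref{cond:init}. This is a direct telescoping, so no general form of the optional stopping theorem is needed.

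\textbf{Conclusion.} Taking expectations in condition~\ref{cond:last_step} and using the tower property $\EE[\EE[\alg(\Gamma)\mid\mathcal F_{\Tstop}]]=\EE[\alg(\Gamma)]$ yields
\[
\EE[\alg(\Gamma)]\le \rho\cdot\opt(\Gamma)+\EE[\Phi_{\Tstop}]\le \rho\cdot\opt(\Gamma).
\]

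Here $\mathcal F_{\Tstop}$ is the usual stopped $\sigma$-algebra, and since $\Tstop$ is bounded, the tower property applies without issue.
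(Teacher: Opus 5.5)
Your proposal is correct and follows essentially the same route as the paper. Like the paper, you bound $\Tstop$ by a deterministic constant (you use $n-1$, the paper uses $n$), telescope the expectations of the stopped process using condition (c) and then (a) to get $\EE[\Phi_{\Tstop}]\le 0$, and finish by taking expectations in (b) via the tower property; your explicit integrability remark is a harmless addition that the paper leaves implicit.
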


\begin{proof}
Let us argue that $\EE[\Phi_{\Tstop}]\le 0$.  Since $\Tstop\le n$ (note that $\Gamma$ is fixed with $n$ vertices), and since, due to \ref{cond:supermartingale}, $(\Phi^*_t)_t$ is a supermartingale with respect to $(\mathcal{F}_t)_t$, this follows immediately by the optional stopping theorem for supermartingales (using property \ref{cond:init}). Alternatively, the inequality can be proved directly:
\begin{align*}
    \EE[\Phi_{\Tstop}]
    &= \EE[\Phi_n^*]
    = \EE\left[
        \Phi_1^* + \sum_{t=1}^{n-1}(\Phi_{t+1}^*-\Phi_t^*)
    \right] 
    = \EE[\Phi_1^*]
        + \sum_{t=1}^{n-1}\EE[\Phi_{t+1}^*-\Phi_t^*] \\
    &= \EE[\Phi_1^*]
        + \sum_{t=1}^{n-1}
        \EE\left[
            \EE[\Phi_{t+1}^*\mid \mathcal F_t] -\Phi_t^*
        \right]
        \overset{\ref{cond:supermartingale}}{\leq} \EE[\Phi_1^*]
        = \EE[\Phi_1]
        \overset{\ref{cond:init}}{\leq} 0.
\end{align*}
With this and condition~\ref{cond:last_step}, we obtain
\begin{align*}
\EE[\alg(\Gamma)]&=\EE \left[ \EE[\alg(\Gamma) \mid \mathcal{F}_{\Tstop}] \right]
\overset{\ref{cond:last_step}}{\leq}
\EE[\rho \cdot \opt(\Gamma) + \Phi_{\Tstop} ]\\
&= \rho \cdot \opt (\Gamma) + \EE[\Phi_{\Tstop}]
\leq \rho \cdot \opt(\Gamma).\qedhere
\end{align*}
\end{proof}

\paragraph{Comment on adaptive adversaries for online graph exploration.}
It is often useful to interpret an online problem as a two-player game: one player is the online algorithm, and the other is the adversary that presents a difficult instance.
The literature considers several adversary models, the most important of which can be summarized as follows. In the \emph{oblivious model}, the adversary constructs the entire problem instance in advance, knowing the algorithm's rules. In the \emph{adaptive offline model}, the adversary may decide only the part of the input that is currently presented to the algorithm. In particular, these choices may depend on the outcomes of previous (random) decisions. The offline optimum cost is then defined as usual, in hindsight.

In their seminal work, Ben-David, Borodin, Karp, Tardos, and Wigderson~\cite{BenDavidBKTW90} established several results comparing these adversary models.
First, deterministic algorithms have the same competitive ratio in the oblivious and adaptive offline models. Moreover, this deterministic competitive ratio also coincides with the randomized competitive ratio in the adaptive offline model. Although the class of problems considered in~\cite{BenDavidBKTW90} does not explicitly include exploration problems, it is not difficult to see that these results also apply to online graph exploration.

Now consider the strategy described in \cref{lem:potential_function}.
Observe that if there is a potential function~$\Phi$ satisfying conditions~\ref{cond:init} and~\ref{cond:supermartingale} against an oblivious adversary, then these conditions are also satisfied against an adaptive adversary. Indeed, this is immediate for condition~\ref{cond:init}, and it follows for condition~\ref{cond:supermartingale} since both sides of the inequality in \ref{cond:supermartingale} can be computed by the agent
at time $t$. Thus, at first sight, these conditions do not seem strong enough to yield a competitive ratio strictly below the deterministic optimum.
It may therefore seem surprising that a potential function depending only on values known to the algorithm can be used to prove such a result.
The key ingredient is the definition of the stopping time together with condition~\ref{cond:last_step}, which is what makes the technique work.

A closer inspection of the proof of \cref{lem:last_step}, where we verify condition~\ref{cond:last_step} for our potential function, shows that we require the adversary only to be oblivious from the time~$\Tstop$ onward.
Concretely, before the agent moves at any time~$t$, the adversary must commit to whether $t=\Tstop$ (this decision is however not revealed to the online algorithm).
If so, the adversary must also commit to the remaining cycle, consistent with the condition $t=\Tstop$. 
In particular, such an adversary has to decide immediately upon revealing an edge whether this edge will be heavy in the final cycle, that is, whether it will be included in the optimum offline solution.
This is similar in spirit to the model of the so-called \emph{adaptive online adversary} introduced in~\cite{BenDavidBKTW90}.

\subsection{The potential function}
\label{sec:potentialfunction}

To prove \cref{thm:distrand}, it suffices to define a potential function fulfilling the conditions of \cref{lem:potential_function}.
To this end, we define 
\begin{equation*}
    \Phi(a,d,C):=C-a-g(a,d),
\end{equation*}
where
\[g(a,d) \coloneq \max\left( 2rd, \frac{(4r-1)a+d}{2} \right) =\begin{cases}
        2rd, & a\leq d,\\[1mm]
        \frac{(4r-1)a+d}{2}, & a>d,
        \end{cases}
        \]
and $r:= \frac{\left(3-\sqrt{2}\right)^2}{8} \approx 0.314.$
In the following, we show that $\Phi$ fulfills conditions \ref{cond:init}--\ref{cond:supermartingale} of \cref{lem:potential_function} with $\rho:=1+r$ for the algorithm $\distrand_{0.5}$ for every nontrivial cycle.
For the remainder of the section, we fix a nontrivial cycle $\Gamma$ and analyze the process defined by the algorithm $\distrand_{0.5}$.
Let $(a_t,b_t,d_t,C_t)_t$ be as defined in \cref{lem:potential_function}.
We begin with condition~\ref{cond:init}, which is a simple observation.

\begin{observation}
\label{obs:potential-start-non-positive}
We have $\Phi_1\leq 0$.
\end{observation}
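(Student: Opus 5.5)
At time $t=1$ the agent has explored exactly two vertices: the start $s$ and one neighbour $u$. By the convention fixed in \cref{sec:algorithms}, the first move traverses the lighter of the two edges incident to $s$. I would first write down the state $(a_1,d_1,C_1)$ explicitly. Let $x$ be the weight of the traversed edge $\{s,u\}$ and let $y \ge x$ be the weight of the other edge at $s$.

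The agent stands at $u$ and has paid $C_1 = x$ so far. Its distance to $s$ is $a_1 = x$. The backtracking boundary edge is the edge at $s$ of weight $y$. Its explored endpoint is $s$ itself, so the distance from $s$ to that explored endpoint is $d_1 = 0$.

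The second step is to evaluate the potential. Since $a_1 = x \ge 0 = d_1$, I split into two cases.

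\begin{itemize}
\item If $x > 0$, then $a_1 > d_1$, so the second branch of $g$ applies and $g(a_1,d_1) = \frac{(4r-1)x}{2}$. Hence
\[
\Phi_1 = x - x - \frac{(4r-1)x}{2} = -\frac{(4r-1)x}{2}.
\]
\item If $x = 0$, then $a_1 = d_1 = 0$, so $g = 0$ and $\Phi_1 = 0$.
\end{itemize}

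So in both cases the claim reduces to checking $4r - 1 \ge 0$. With $r = (3-\sqrt2)^2/8 = (11-6\sqrt2)/8 \approx 0.314$, we get $4r \approx 1.257 > 1$, so $\Phi_1 \le 0$ holds deterministically, as condition~\ref{cond:init} requires.

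The only point needing care is the definition of $d$. It measures the distance from $s$ to the explored endpoint of the backtracking boundary edge; the paper's wording "unexplored endpoint" would include that edge's own weight, but the stated backtracking cost $a+d$ only makes sense if $d$ stops at the explored endpoint. I would settle this with \cref{fig:explore-mid}, where $d$ is the explored path from $s$ to the far endpoint of the explored part. At time $1$ that far endpoint is $s$ itself, which gives $d_1 = 0$.

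As a robustness check, suppose one instead read $d_1 = y$. Then the first branch $2ry$ of $g$ would dominate, and since $x \le y$ we would get $g \ge 0$ and $C_1 - a_1 = 0$, so still $\Phi_1 \le 0$. Either way the claim follows because $g \ge 0$ and $C_1 = a_1$. This last fact, that before any backtracking the cost paid equals the distance back to $s$, is the real content of the observation.
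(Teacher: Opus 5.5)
Your proof is correct, and its decisive step, $C_1=a_1$ together with $g\ge 0$, is exactly the paper's one-line argument. One correction: your reading of $d$ is backwards. The paper's $d$ does run to the \emph{unexplored} endpoint of the backtracking boundary edge; that is why backtracking costs $a+d$ (just as a direct move costs $b$), and why the backtracking update is $a'=d$, $d'=a+b$. So in fact $d_1=y\ge x=a_1$, your ``robustness check'' (giving $\Phi_1=-2ry$) is the real case, and the $d_1=0$ computation is moot.
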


\begin{proof}
After the first edge traversal, we have $C_1=a_1$. Moreover, we have $g(a,d)\geq 0$ for all~$a,d\geq 0$.
Therefore, $\Phi_1=C_1-a_1-g(a_1,d_1)\leq 0$.
\end{proof}

Next, we show that condition~\ref{cond:last_step} is satisfied.

\begin{lemma}
\label{lem:last_step}
    We have $\EE[\alg(\Gamma) \mid \mathcal{F}_{\Tstop}] \leq (1+r)\cdot \opt(\Gamma) + \Phi_{\Tstop}$.
\end{lemma}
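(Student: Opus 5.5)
The plan is to condition on the full state at time $\Tstop$ and treat the two kinds of nontrivial cycles separately. By \cref{obs:tstop}, at most one random decision (the one at time $\Tstop$) remains after $\Tstop$. So $\EE[\alg(\Gamma)\mid\mathcal F_{\Tstop}]$ is an explicit average over at most two deterministic continuations. Substituting $\Phi_{\Tstop}=C-a-g(a,d)$, with $a,b,d,C$ evaluated at $\Tstop$, the claim becomes
\[
\EE[\alg(\Gamma)\mid\mathcal F_{\Tstop}]-C+a+g(a,d)\le (1+r)\,\opt(\Gamma).
\]

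\emph{No heavy edge.} Here $\Tstop$ is the time the last vertex is visited. Then $a+d$ is the total weight of the cycle, so $\opt(\Gamma)=a+d$, and the remaining cost is exactly $\min(a,d)$. We must show $\min(a,d)+a+g(a,d)\le (1+r)(a+d)$.
\begin{itemize}
\item If $a\le d$, this reads $2a+2rd\le (1+r)(a+d)$, i.e.\ $(1-r)a\le(1-r)d$.
\item If $a>d$, a short computation reduces it to $\tfrac{1-2r}{2}(d-a)\le 0$, which holds since $r<1/2$.
\end{itemize}

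\emph{Heavy edge $e^*$.} First I observe that at time $\Tstop$ the heavy edge must be the \emph{direct} boundary edge, so $b=w(e^*)$. The only edge newly revealed at time $\Tstop$ is the one incident to the newly visited current vertex $v$, other than the edge just traversed. For $\Tstop\ge1$, this is the direct boundary edge; nontriviality excludes $\Tstop=0$.

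Let $c$ be the weight of the backtracking boundary edge and $l$ the total weight of the unseen remainder. Then $\opt(\Gamma)=2(a+d+c+l)$. Using forward-greediness after $\Tstop$, I compute the two continuations explicitly.
\begin{itemize}
\item \emph{Backtracking.} The agent pays $a+d+c$, then walks $l$ directly to the endpoint of $e^*$, then returns along the light side at cost $l+c+d$. The remaining cost is $a+2(d+c+l)$.
\item \emph{Moving directly.} The agent pays $b$, then $l+c$ to reach the last vertex, and returns via the $d$-side, since $e^*$ is heavy. The remaining cost is $b+l+c+d$.
\end{itemize}
The expected excess of the direct option over backtracking is $\pdir\,(b-a-d-c-l)\le \pdir\,(b-a-d)$. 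I claim this is at most $a/2$ in both regimes of \cref{def:distrand} with $\alpha=1/2$:
\begin{itemize}
\item if $b\le \tfrac32 a+d$, then $\pdir=1$ and $b-a-d\le a/2$;
\item otherwise $\pdir\,(b-a-d)\le \pdir\,(b-a-d)\cdot\tfrac{b-d}{b-d-a}$, and equality-type bounding gives $\pdir(b-a-d)=\tfrac{a}{2}\cdot\tfrac{b-a-d}{b-d-a}=a/2$.
\end{itemize}
Hence the expected remaining cost is at most $a+2(d+c+l)+a/2$. It then suffices to show $a/2+g(a,d)\le 2r(a+d+c+l)$, and it is enough to check $a/2+g(a,d)\le 2r(a+d)$.
\begin{itemize}
\item If $a\le d$, this is $a/2+2rd\le 2ra+2rd$, which holds because $2r\approx0.628\ge 1/2$.
\item If $a>d$, it is $2ra+d/2\le 2ra+2rd$, which again holds because $2r\ge1/2$.
\end{itemize}

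The main obstacle I expect is the bookkeeping in the heavy case. It requires justifying carefully that $e^*$ must be the direct edge at $\Tstop$, and getting the exact continuation costs right, including edge cases such as $l=0$ or the heavy edge's far endpoint coinciding with the other boundary endpoint. Once the bound $\pdir(b-a-d)\le a/2$ is isolated, the remaining inequalities are routine.
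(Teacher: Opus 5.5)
Your argument is essentially the paper's proof. In the no-heavy-edge case you check the paper's estimate $\min(a,d)\le d+r(a+d)-g(a,d)$ with the same two subcases. In the heavy case you compare the two deterministic continuations, bound the extra cost of moving directly by $\pdir\,(b-a-d)\le a/2$, and reduce to $a/2+g(a,d)\le 2r(a+d)$, which is \eqref{eq:second-estimate-g} and uses $r\ge 1/4$. Your explicit check that $e^*$ must be the direct boundary edge at $\Tstop$ is a detail the paper leaves implicit.

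The heavy case does contain a bookkeeping error: you have misread the definition of $d$, and your separate quantity $c$ comes from that. In the paper, $d$ is the distance from $s$ to the \emph{unexplored} endpoint of the backtracking boundary edge, so it already includes that edge's weight. Backtracking therefore costs $a+d$, and $\opt(\Gamma)=2(a+d+l)$.
\begin{itemize}
\item \emph{With the paper's $d$.} Your formula $\opt(\Gamma)=2(a+d+c+l)$ and both continuation costs double-count $c$. The chain then does not close: with the correct optimum, your bound $a+2(d+c+l)+a/2$ on the remaining cost would require $a/2+g(a,d)+2c\le 2r(a+d+l)$. This fails, e.g., for $l=0$ and $c=d\ge a$.
\item \emph{With your own $d$, excluding $c$.} The algorithm actually moves directly with probability $\pdir(a,b,d+c)$, and $\Phi_{\Tstop}$ contains $g(a,d+c)$. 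Your bound $\pdir\,(b-a-d)\le a/2$ is then unjustified: when $\pdir=1$ you only get $b-a-d\le a/2+c$. Also, a bound stated with $g(a,d)$ is weaker than the lemma, since $g$ is increasing.
\end{itemize}

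The repair is immediate: absorb $c$ into $d$. Equivalently, drop $c$ together with $l$, using $\pdir\,(b-a-(d+c)-l)\le \pdir\,(b-a-(d+c))\le a/2$. Then apply your inequality $a/2+g(a,x)\le 2r(a+x)$, which holds for all $x\ge 0$, at $x=d+c$. With this change your proof coincides with the paper's.

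A minor point: in your second bullet, the inequality with the factor $\frac{b-d}{b-d-a}$ is superfluous. The identity $\pdir\,(b-a-d)=a/2$ in that regime follows directly from \cref{def:distrand}.
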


\begin{proof} 
We distinguish two cases, depending on the existence of a heavy edge.

\textbf{Case 1: \(\Gamma\) has no heavy edge.} 
At the time $\Tstop$ (cf.~\cref{fig:explore-done}), 
we have \(\opt(\Gamma)=a+d\) and
\begin{align*}
\alg(\Gamma)&=C+\min(a,d)
= C-a+a+\min(a,d)\\
 & \overset{(*)}{\leq} C-a+(a+d)(1+r)-g(a,d) \\
& = (a+d)(1+r) + C-a - g(a,d) = (1+ r)\cdot\opt(\Gamma) + \Phi_{\Tstop},
\end{align*}
where it is only left to prove the inequality $(*)$. This follows from the following auxiliary estimate for $g$:
\begin{equation*}
    \min(a,d) \leq d+r(a+d)-g(a,d)
\end{equation*} 
for all \(a,d \geq 0\). To see this, we distinguish two cases. If \(a\leq d\), then \(\min(a,d)=a\) and \(g(a,d)=2rd\) and we have \(a \leq (1-r)d+ra=d+r(a+d)-2rd\). Next, consider the case \(a >d\). Then, \(\min(a,d)=d\) and \(g(a,d)=\frac{(4r-1)a+d}{2}\). In this case, the desired inequality \(d \leq d + r(a+d)-\frac{(4r-1)a+d}{2}\) 
is equivalent to
$2r(a+d)\ge (4r-1)a+d$ and 
follows from
\[2r(a+d)
=(4r-1)a+(1-2r)a+2rd
\overset{a>d}{\geq}
(4r-1)a+(1-2r)d+2rd
= (4r-1)a+d.\]

\textbf{Case 2: \(\Gamma\) has a heavy edge.} 
Consider the situation at time $\Tstop$ when the heavy
edge is discovered (see \cref{fig:explore-mid}),
and let $l$ be the total weight of the edges which have not yet
been discovered.
If the agent moves directly, it incurs cost $b$ and then deterministically always proceeds directly, incurring another cost of $l+d$.
If it backtracks, it incurs cost $a+d$ and then deterministi\-cally~$2l+d$.
Moreover, it follows from the definition of~$\pdir$ (cf.~\cref{def:distrand}) that, since $b>a+d$, we have $\pdir(a,b,d)=\min(1,\alpha a/(b-a-d))$ so that $\pdir(a,b,d)(b-a-d)\leq \alpha a$.
Hence, we have
\begin{align*}
\EE[\alg(\Gamma) \mid \mathcal{F}_{\Tstop}] &\leq C+\pdir(a,b,d)(b+l+d)+(1-\pdir(a,b,d))
    (a+2d+2l) \\
&= C+a+2d+2l+\pdir(a,b,d)(b-a-d-l) \\
&\leq C+a+2d+2l+\pdir(a,b,d)(b-a-d) \\
&\leq C+a+2d+2l+\alpha a\\
&=(C -a)+2(a+d+l)+\alpha a.
\end{align*}
Since \(\opt(\Gamma)=2(a+d+l)\), and \(\alpha=\nicefrac{1}{2}\), we obtain
\begin{align*}
\frac{\EE[\alg(\Gamma) \mid \mathcal{F}_{\Tstop}]}{\opt(\Gamma)} \leq 1 + \frac{C-a}{\opt(\Gamma)} +\frac{a}{2\opt(\Gamma)}
=1+\frac{\phistop}{\opt(\Gamma)}+\frac{a+2g(a,d)}{4(a+d+l)}.
\end{align*}
Thus, it is only left to show that \(\frac{a+2g(a,d)}{4(a+d+l)} \leq r.\) This is true if \(a \leq 4r(a+d+l)-2g(a,d)\).
As~$l\geq 0$, it suffices to show
\begin{equation}
    \label{eq:second-estimate-g}
    a \leq 4r(a+d)-2g(a,d)
\end{equation}
for all \(a,d \geq 0\). To see this, we distinguish again two cases. If \(a\leq d\), then \(g(a,d)=2rd\) and we clearly have~\(a \leq 4ra\) since $r\geq \nicefrac{1}{4}$. If \(a>d\), we have \(g(a,d)=\frac{(4r-1)a+d}{2}\) and 
\[a = 4r(a+d)-((4r-1)a+4rd) \leq 4r(a+d)-((4r-1)a+d),\]
where we have again used $r\geq \nicefrac{1}{4}$. So \eqref{eq:second-estimate-g} holds, which completes the proof of the lemma in the case of a heavy edge.
\end{proof}

Last, but not least, we prove that condition~\ref{cond:supermartingale} holds.

\begin{lemma}
    \label{lem:supermartingale}
    For $\Phi_t^*:=\Phi_{\min(t,\Tstop)}$, we have
    \(
    \EE[\Phi_{t+1}^* \mid \mathcal{F}_t]\leq \Phi_{t}^*
    \)
    for every $t\geq 1$.
\end{lemma}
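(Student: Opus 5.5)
The plan is to reduce the claim to one explicit inequality for a single exploration step, and then check that inequality by a case analysis on the two pieces of $g$.

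\textbf{Step 1: trivial cases.} On the event $\Tstop\le t$ we have $\Phi^*_{t+1}=\Phi^*_t=\Phi_{\Tstop}$, which is $\mathcal F_t$-measurable, so there is nothing to show. So assume $t<\Tstop$ and let $(a,b,d,C)=(a_t,b_t,d_t,C_t)$. Then the agent still has to decide between the two boundary edges, and $\Phi^*_{t+1}=\Phi_{t+1}$.

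\textbf{Step 2: the two possible next states.} We track how $(a,d,C)$ changes under each move.
\begin{itemize}
\item \emph{Direct move.} The agent pays $b$ and then stands at distance $a+b$ from $s$, while $d$ is unchanged. So the new state is $(a+b,d,C+b)$.
\item \emph{Backtrack.} The agent pays $a+d$ and stands at distance $d$ from $s$. The unexplored endpoint of the new backtracking edge is at distance $a+b$ from $s$. So the new state is $(d,a+b,C+a+d)$.
\end{itemize}
If the move reaches the last vertex, the conventions of \cref{sec:algorithms} give the same formulas. Hence the increments of $\Phi$ are
\[
\Delta_{\mathrm{dir}}=g(a,d)-g(a+b,d),\qquad \Delta_{\mathrm{back}}=2a+g(a,d)-g(d,a+b).
\]
Since $4r-1>0$, $g$ is nondecreasing in its first argument, so $\Delta_{\mathrm{dir}}\le 0$. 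Therefore, whenever $b\le \tfrac32 a+d$ we have $\pdir=1$ and we are done.

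\textbf{Step 3: the randomized regime.} Now assume $b>\tfrac32a+d$. Then $a+b>d$, which makes both new $g$-values explicit:
\[
g(a+b,d)=\tfrac{(4r-1)(a+b)+d}{2},\qquad g(d,a+b)=2r(a+b).
\]
Write $x:=b-a-d>a/2$, so that $\pdir=a/(2x)$. The condition $\pdir\Delta_{\mathrm{dir}}+(1-\pdir)\Delta_{\mathrm{back}}\le 0$ becomes
\[
g(a,d)+\frac{a}{2x}\Bigl(-\tfrac{(4r-1)(a+b)+d}{2}\Bigr)+\Bigl(1-\frac{a}{2x}\Bigr)\bigl(2a-2r(a+b)\bigr)\le 0,
\qquad b=x+a+d.
\]
Multiplying by $2x>0$ turns this into the statement that a quadratic polynomial in $x$, with coefficients linear in $a$ and $d$, is nonpositive on $x\ge a/2$. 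By homogeneity we may set $a=1$ and split into the two cases $d\ge 1$, where $g=2rd$, and $d<1$, where $g=\tfrac{(4r-1)+d}{2}$.

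In each case the leading coefficient in $x$ is $2\bigl(g(1,d)+2-2r(1+d)\bigr)-4r$, which I expect to be negative (it reduces to a linear expression in $d$ that can be checked directly). The dependence on $d$ should then be handled by monotonicity or linearity in $d$ for fixed $x$, reducing each case to an endpoint value of $d$: $d=1$, respectively $d\to 0$ or $d\to 1$. At that endpoint one is left with a single univariate quadratic in $x$.

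\textbf{Main obstacle.} The main difficulty is this final univariate quadratic. I expect the value $r=(3-\sqrt2)^2/8$ to be exactly the value at which its maximum over $x\ge 1/2$ equals $0$, that is, its discriminant vanishes (a double root, presumably near $x\approx \sqrt2/2$). This would explain the $\sqrt2$ in $r$. I would first locate the worst-case $d$ in each case, then compute the vertex of the quadratic and verify that it is nonpositive using the identity $8r=(3-\sqrt2)^2$. If the vertex lies below $x=1/2$, it suffices instead to check the boundary value $x=1/2$, where $\pdir=1$ and the inequality already holds by Step 2.
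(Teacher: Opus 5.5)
Your route is the paper's route: the same transition formulas, the same split into $a\le d$ and $a>d$, and the same substitution $x=b-a-d$. A discriminant test on the resulting quadratic is equivalent to the paper's maximization of $f$ and its AM--GM step.

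However, the one concrete computation you commit to is wrong, and the check you propose would fail. The quantity $2\bigl(g(1,d)+2-2r(1+d)\bigr)-4r$ is positive in both cases:
\begin{itemize}
\item for $d\ge 1$ it equals $4-8r=6\sqrt2-7>0$;
\item for $d<1$ it equals $3-4r+(1-4r)d>4-8r$.
\end{itemize}
If this were the leading coefficient, the quadratic would be unbounded above and the lemma would be false. What you wrote is, up to $\tfrac12$, the coefficient of the \emph{linear} term. The actual quadratic term is $-4rx^2$, coming from $2x\cdot(-2r(a+b))$ with $a+b=x+2a+d$. Carrying out the expansion gives
\[
2x\bigl(\pdir\Delta_{\mathrm{dir}}+(1-\pdir)\Delta_{\mathrm{back}}\bigr)=-4r\,x^2+Bx-a^2,\qquad B:=2g(a,d)+\tfrac92a-8ra-4rd .
\]
Since $4rx^2+a^2\ge 4\sqrt r\,ax$, with equality at $x=a/(2\sqrt r)\approx 0.89a$, the claim reduces to $B\le 4\sqrt r\,a$. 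That equality point lies inside $x>a/2$, so the constraint $x>a/2$ buys nothing. It is also not near $\sqrt2/2$, as you guessed.

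This inequality is exactly what you leave as an expectation, and it closes as follows.
\begin{itemize}
\item For $a\le d$, the $d$-terms cancel and $B=(\tfrac92-8r)a$. The required bound is $\tfrac94-4r-2\sqrt r\approx-0.13<0$, which is the paper's $f(\sqrt{1/r})$.
\item For $a>d$, we have $B=(\tfrac72-4r)a+(1-4r)d\le(\tfrac72-4r)a$ because $r>\tfrac14$. So the worst endpoint is $d=0$, not $d\to1$. Here $\tfrac74-2r-2\sqrt r=0$ holds exactly, since $\sqrt r=(3\sqrt2-2)/4$. At $d=0$ the quadratic is $-(2\sqrt r\,x-a)^2$, which confirms your intuition that $r$ is the tight value; this is the paper's AM--GM line.
\end{itemize}

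Finally, your normalization $a=1$ excludes $a=0$, which does occur (e.g., when the first edge has weight $0$). In that case $\pdir=0$ and the drift is $g(0,d)-g(d,b)=-2r(b-d)<0$. The paper treats this case separately.
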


\begin{proof}
If $t\geq \Tstop$, the asserted inequality holds trivially with equality, so let \(t\) be some time before~\(\Tstop\). Let \(a,b,d,\) and \(C\) be the corresponding values at time~\(t\), i.e.,~\(a=a_t\), \(b=b_t\) and so on, and let \(a', b',d'\) and \(C'\) be the corresponding values at time \(t+1.\) We need to show 
   \begin{equation*}
       \EE[C'-a'-g(a',d') \mid \mathcal{F}_t]\leq C-a-g(a,d).
   \end{equation*}
   If the algorithm moves directly, then
    \[C'=C+b, \qquad a'=a+b, \qquad d'=d.\]
    If the algorithm backtracks, then
    \[C'=C+a+d, \qquad a'=d, \qquad d'=a+b.\]
    First suppose that \(b \leq 1.5a +d.\) Then \(\pdir(a,b,d)=1\) and the algorithm moves directly. Then \[C' - a' - g(a',d') = C - a - g(a+b,d) \leq C -a - g(a,d),\]
    where the inequality holds as the function \(g\) is increasing in the first variable.
    
    Now suppose that \(b > 1.5a +d.\) If \(a=0,\) the algorithm backtracks with probability 1. 
    In this case, 
    \[C'-a'-g(a',d')=C+a-g(d,a+b)=C-g(d,b) \leq C-a - g(a,d),\] as the function \(g\) is increasing in both variables and we have $d\geq 0=a$ and  \(b \geq d.\) 
    
    Hence we can assume in the following that \(a>0\) and \(p:=\pdir(a,b,d)=a/(2(b-a-d))\in (0,1).\) Since \(b \geq 1.5a+d\), 
    it follows $a+b\ge d$ and hence
    \[g(a+b,d)=\frac{(4r-1)(a+b)+d}{2}, \qquad g(d,a+b)=2r(a+b).\]
    So the expected change in \(\Phi\) is
    \begin{align*}
        \Delta &\coloneq \EE[\Phi'-\Phi\mid \mathcal{F}_t] \\
        &= p\bigl(C-a-g(a+b,d)\bigr)+(1-p)\bigl(C+a-g(d,a+b)\bigr)-\bigl(C-a-g(a,d)\bigr) \\
        &= 2a(1-p)-p\,g(a+b,d)-(1-p)g(d,a+b) +g(a,d) \\
        &= 2a(1-p) - p\,\frac{(4r-1)(a+b)+d}{2} - (1-p)2r(a+b) +g(a,d) \\
        &= g(a,d) + 2a - 2r(a+b) + p\left(2r(a+b)-2a - \frac{(4r-1)(a+b)+d}{2}\right) \\
        &= g(a,d) + 2a - 2r(a+b) + p\, \frac{b-3a-d}{2} \\
        &= g(a,d) + 2a - 2r(a+b) + \frac{a(b-3a-d)}{4(b-a-d)}.
    \end{align*}
    
    Now it only remains to prove that \(\Delta \leq 0\) by upper bounding the resulting expression for all possible values of \(a,b,d\) (i.e., $a>0, d\geq 0$ and $b\geq 1.5a+d$).
    A reader can convince themselves of this bound using
    a computer tool of their choice.
    Below, we provide an analytical proof.

    We distinguish two cases.
    
    \textbf{Case 1: \(a \leq d\).} Then \(g(a,d)=2rd.\) We have
    \begin{align*}
        \Delta &= 2rd + 2a - 2r(a+b) + \frac{a(b-3a-d)}{4(b-a-d)}
        = -2r(b-a-d) + \left(\frac{9}{4}-4r\right)a-\frac{a^2}{2(b-a-d)}.
    \end{align*}
    Now, dividing this equation by \(a>0\) we obtain
    \[\frac{\Delta}{a}=\left(\frac{9}{4}-4r\right)-2r\frac{b-a-d}{a}-\frac{a}{2(b-a-d)}.\]
    Let \(x:=\frac{2(b-a-d)}{a}>0.\)
    For all \(x>0\), we have 
    \[f(x):=\left(\frac{9}{4}-4r\right)-xr-\frac{1}{x}<0.\]
    To see this, note that the function achieves its maximum for $x> 0$ at the positive zero of its derivative~$f'(x)=-r+1/x^2$, namely at $x=\sqrt{1/r}$. Hence, $f(x)\leq f(\sqrt{1/r})\thickapprox -0.129$.
    So in particular, we have $\Delta\leq a\cdot f(x)< 0$.
    
    \textbf{Case 2: \(a > d\).} Then \(g(a,d)=\frac{(4r-1)a+d}{2}\) and we have
    \begin{align*}
        \Delta &= \frac{(4r-1)a+d}{2} + 2a - 2r(a+b) + \frac{a(b-3a-d)}{4(b-a-d)}
        = \frac{3}{2}a+\frac{1}{2}d-2rb+ \frac{a(b-3a-d)}{4(b-a-d)}.
    \end{align*}

To bound the remaining expression, we set \(x\coloneq b-a-d>0\) and make use of the following auxiliary estimate which follows from the inequality of arithmetic and geometric means
        \begin{equation}
            \label{eq:AM-GM}
            2rx+\frac{a^2}{2x} \geq 2\sqrt{2rx\frac{a^2}{2x}} = 2a\sqrt{r}.
        \end{equation}
        Overall, we obtain
        \begin{align*}
            \Delta&=\frac{3}{2}a+\frac{1}{2}d-2rb+ \frac{a(b-3a-d)}{4(b-a-d)} = \frac{3}{2}a+\frac{1}{2}d-2rb+ \frac{a}{4}-\frac{a^2}{2x} \\
            &= \left(\frac{7}{4}-2r\right)a+\left(\frac{1}{2}-2r\right)d-2rx-\frac{a^2}{2x} 
            \overset{\cref{eq:AM-GM}}{\leq} \left(\frac{7}{4}-2r\right)a+\left(\frac{1}{2}-2r\right)d - 2a\sqrt{r} \\
            &= \left(\frac{7}{4}-2r-2\sqrt{r}\right)a+\left(\frac{1}{2}-2r\right)d
            =0\cdot a + \left(\frac{1}{2}-2r\right)d
            \leq 0,
        \end{align*}
So in either case, we have $\Delta \leq 0$. This completes the proof of the lemma.
\end{proof}

Finally, putting the results of \cref{obs:potential-start-non-positive}, \cref{lem:last_step}, and \cref{lem:supermartingale} together, we obtain from~\cref{lem:potential_function} for \(\rho=1+r\) that~\cref{thm:distrand} holds. 


\section{Lower bounds for randomized algorithms}
\label{sec:lowerbounds}

We begin by proving a lower bound for the competitive ratio of any randomized algorithm (\cref{thm:lowerbound}) and then give an improved lower bound for a natural subclass of randomized algorithms to which the algorithms \(\distrand_\alpha\) belong (\cref{thm:lowerbound-specific}).

Let us denote a cycle on \(n\) vertices together with a starting vertex \(s\) by \((w_1,\dots,w_n),\) where~\(w_i\) denotes the weight of the \(i\)-th edge starting from \(s\) in one of the  directions. 
In particular, at the beginning of exploration, the agent sees two edges
with weights $w_1$ and $w_n$.

\thmlowerbound*

\begin{proof}
   We prove this statement using Yao's principle~\cite[Section 8.3]{yaosprinciple}, i.e., we give a set of cycles together with a probability distribution over them and prove that any deterministic algorithm has expected competitive ratio on a random cycle from this set of at least~\(1+\frac{\sqrt{2}}{8+3\sqrt{2}}\). By a straightforward application of Yao's principle, this then implies the theorem.

    Let \(x\coloneq 2+2\sqrt{2}.\) Consider the set of the following three cycles each on four vertices, illustrated in \cref{fig:lower-bound}:    
    \[\Gamma_1=(1,0,x,1), \; \Gamma_2=(1,x,0,1), \text{ and } \Gamma_3=(1,x,x,1).\] 

    \begin{figure}
    \centering
    \begin{tikzpicture}[
        vertex/.style={circle, draw, fill=white, minimum size=4.5mm,
                       inner sep=0pt, font=\small},
        weight/.style={fill=white, inner sep=1.5pt, font=\small}
      ]
      \foreach \dx/\wA/\wB/\wC/\wD/\lbl in {%
          0/1/0/x/1/{\Gamma_1},
          4/1/x/0/1/{\Gamma_2},
          8/1/x/x/1/{\Gamma_3}}{
        \begin{scope}[xshift=\dx cm]
          \node[vertex] (s) at (0, 1.2)  {$s$};
          \node[vertex] (l) at (-1.2, 0) {};
          \node[vertex] (b) at (0, -1.2) {};
          \node[vertex] (r) at (1.2, 0)  {};
          \draw (s) -- node[weight, auto, swap] {$\wA$} (l);
          \draw (l) -- node[weight, auto, swap] {$\wB$} (b);
          \draw (b) -- node[weight, auto, swap] {$\wC$} (r);
          \draw (r) -- node[weight, auto, swap] {$\wD$} (s);
          \node at (0, -2) {$\lbl$};
        \end{scope}
      }
    \end{tikzpicture}
    \caption{The graphs from the proof of \cref{thm:lowerbound}.}
    \label{fig:lower-bound}
    \end{figure}
    
    Let \(p\coloneq\frac{2}{8+3\sqrt{2}}.\) Assume cycles \(\Gamma_1\) and \(\Gamma_2\) are each sampled with probability \(p\) and cycle \(\Gamma_3\) with probability~\(1-2p.\)
    
    First of all, note that \(\opt(\Gamma_1)=\opt(\Gamma_2)=4\) and \(\opt(\Gamma_3)=2+2x\) as \(x\geq 2\). 
Now let \(\alg\) be some deterministic algorithm for cycle exploration. 
Since all three cycles look exactly the same from the starting vertex, the first vertex visited by $\alg$ is the same in all three cycles and, by symmetry,
we can assume without loss of generality that the agent starts towards the \(x\)-edge in \(\Gamma_1\) (i.e., in any cycle in \cref{fig:lower-bound}, the agent visits the rightmost vertex first).
First, we simply note~$\alg(\Gamma_2)\ge \opt(\Gamma_2)$.
For the other two cycles, the agent encounters an edge of weight $x$ after its first traversal and then makes the same decision on both 
$\Gamma_1$ and $\Gamma_3$.
    
    If the agent traverses the first seen \(x\)-edge, then \(\alg(\Gamma_1) \geq 2+x\). If the agent does not traverse the first seen \(x\)-edge, then \(\alg(\Gamma_3)\ge 4+2x.\)
    Since cycles \(\Gamma_1, \Gamma_2\) are sampled with probability \(p\) and cycle~\(\Gamma_3\) is sampled with probability~\(1-2p\), we obtain for a random cycle \(\Gamma\) chosen from the set \(\{\Gamma_1,\Gamma_2,\Gamma_3\}\) according to the given probability distribution that
    \begin{align*}
        \EE\left[\frac{\alg(\Gamma)}{\opt(\Gamma)}\right] & \geq \min\left\{ p \cdot 1 + p \cdot 1 + (1-2p)\, \frac{4+2x}{2+2x}, \; p \, \frac{2+x}{4} + p \cdot 1 + (1-2p) \cdot 1\right\} \\
        &= 1 + \min\left\{ (1-2p)\,\frac{1}{1+x},\; p\,\frac{x-2}{4}\right\}
        = 1+\frac{\sqrt{2}}{8+3\sqrt{2}}. \qedhere
    \end{align*}
\end{proof}

Recall that a randomized algorithm for online cycle exploration is called \textit{forward-greedy} if the agent moves directly whenever \(b \leq a+d\).
(The algorithm can use any decision rule in the first step,
as well as after discovering the last vertex.)
For the class of forward-greedy algorithms, we obtain the following improved lower bound.

\thmlowerboundspecific*

\begin{proof}
    Let \(1>\epsilon>0\) such that $1/\epsilon^2\in \N$.
    Consider the following two cycles (cf.~\cref{fig:lower-bound-forward}): 
    \[
    \Gamma_1=(\epsilon^2,\epsilon^2,\dots,\epsilon^2,1+\sqrt{2},0,\epsilon), \quad  \Gamma_2=(\epsilon^2,\epsilon^2,\dots,\epsilon^2,1+\sqrt{2},2+\sqrt{2},\epsilon),
    \]
    where in each of them there are \(\nicefrac{1}{\epsilon^2}\) many edges of weight \(\epsilon^2.\)
    For \(i \in \{1,2\}\), let \(v_i\) be the vertex between the edge of weight \(\epsilon^2\) and the edge of weight \(1+\sqrt{2}\) in \(\Gamma_i\). Analogously, let \(u_i\) be the vertex between the edge of weight 0 (resp.~weight \(2+\sqrt{2}\)) and \(\epsilon\).
    
    \begin{figure}
    \centering
    \begin{tikzpicture}[scale=0.7,
        vertex/.style={circle, draw, fill=white, minimum size=5mm,
                       inner sep=0pt, font=\small},
        weight/.style={fill=white, inner sep=1.5pt, font=\small}
      ]
      \foreach \dx/\vlab/\ulab/\medge/\glab in {%
          0/{v_1}/{u_1}/{0}/{\Gamma_1},
          12/{v_2}/{u_2}/{2+\sqrt{2}}/{\Gamma_2}}{
        \begin{scope}[xshift=\dx cm]
          \node[vertex] (s) at (0, 3)     {$s$};
          \node[vertex] (u) at (2.2, 1.2) {$\ulab$};
          \node[vertex] (w) at (2.2,-1.2) {};
          \node[vertex] (v) at (0,-3)     {$\vlab$};
          \node[vertex] (c1) at (-1.8, 2.2) {};
          \node[vertex] (c2) at (-2.7, 0.9) {};
          \node[vertex] (c3) at (-2.1,-2.0) {};
          \draw (s)  -- node[weight, auto, swap] {$\epsilon^2$} (c1);
          \draw (c1) -- node[weight, auto, swap] {$\epsilon^2$} (c2);
          \draw[dotted, thick] (c2) to[bend right=20] (c3);
          \draw (c3) -- node[weight, auto, swap] {$\epsilon^2$} (v);
          \node[align=center, font=\small] at (-4.7, -0.55)
                {$1/\epsilon^2$ edges\\ of weight $\epsilon^2$};
          \draw (v) -- node[weight, auto, swap] {$1+\sqrt{2}$} (w);
          \draw (w) -- node[weight, auto, swap] {$\medge$} (u);
          \draw (u) -- node[weight, auto, swap] {$\epsilon$} (s);
          \node[font=\large] at (0,-4.0) {$\glab$};
        \end{scope}
      }
    \end{tikzpicture}
    \caption{The cycles from the proof of \cref{thm:lowerbound-specific}.}
    \label{fig:lower-bound-forward}
    \end{figure}
    Let \(\alg\) be a forward-greedy randomized algorithm for online cycle exploration. Hence, the agent moves directly whenever \(b \leq a+d\). As a first step, we prove the following. 
    \begin{claim}
        \label{cl:v-visited-before-u} If $\alg$ has competitive ratio at most 2, then for $i=1,2$,
        the vertex \(v_i\) is visited before~\(u_i\)
        with probability at least $1-4\epsilon$.
    \end{claim}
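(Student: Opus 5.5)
The idea is to reduce the claim to a single random choice at $s$, the choice of the first edge, and then to bound the probability of the bad choice using an auxiliary cycle on which that choice is very expensive compared to the optimum.

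\emph{Step 1: the order of $v_i$ and $u_i$ is decided by the first move.} In $\Gamma_i$ the agent initially sees two edges at $s$: one of weight $\epsilon^2$, leading into the path towards $v_i$, and one of weight $\epsilon$, leading to $u_i$. Suppose the agent first moves along the $\epsilon^2$-edge. I claim that forward-greediness forces it to walk the whole $\epsilon^2$-path up to $v_i$ before ever traversing the $\epsilon$-edge. At every intermediate vertex of that path the direct boundary edge has weight $b=\epsilon^2$. The backtracking boundary edge is the $\epsilon$-edge, whose unexplored endpoint $u_i$ is at distance $d=\epsilon$ from $s$. Since $\epsilon<1$, we get $b=\epsilon^2\le\epsilon\le a+d$, so the agent must move directly, and reaching $v_i$ requires no further decision. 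Conversely, if the first move is along the $\epsilon$-edge, then $u_i$ is visited first. Hence $\Pr[u_i \text{ before } v_i]=q$, where $q$ is the probability that the algorithm's first move takes the $\epsilon$-edge when it sees edges of weights $\epsilon^2$ and $\epsilon$ at $s$. The first decision depends only on what is visible at $s$, so $q$ is the same number for every cycle presenting these two edge weights at the start. In particular, $q$ is the same for $\Gamma_1$ and $\Gamma_2$.

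\emph{Step 2: an auxiliary cycle that punishes the bad first move.} Consider the triangle $\Gamma'=(\epsilon^2,0,\epsilon)$. From $s$ it looks exactly like $\Gamma_i$, so the first move takes the $\epsilon$-edge with probability $q$. In $\Gamma'$ the $\epsilon$-edge is heavy because $\epsilon>\epsilon^2+0$. Therefore $\opt(\Gamma')=2\epsilon^2$. Whenever the first move takes the $\epsilon$-edge, the cost is at least $\epsilon$. So $\EE[\alg(\Gamma')]\ge q\epsilon$.

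\emph{Step 3: conclude from competitiveness.} Since \alg has competitive ratio at most $2$, we have $q\epsilon\le \EE[\alg(\Gamma')]\le 2\cdot 2\epsilon^2$, hence $q\le 4\epsilon$. By Step 1 this gives $\Pr[v_i\text{ before }u_i]\ge 1-4\epsilon$ for $i=1,2$.

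I expect the only real subtlety to be Step 1. There one must check carefully that $d$ is measured to the unexplored endpoint $u_i$ of the $\epsilon$-edge, so that $d=\epsilon$ even while the agent is deep in the path. It is also worth noting that the forward-greedy condition says nothing about the very first move, which is exactly why that move carries all the randomness. Choosing $\Gamma'$ is the key idea; any cycle with the same two visible edges at $s$ and a heavy $\epsilon$-edge would serve.
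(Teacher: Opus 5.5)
Your proposal is correct and matches the paper's proof. The paper likewise uses forward-greediness to reduce the claim to the first move at $s$, and bounds the probability of taking the $\epsilon$-edge via the same auxiliary cycle $(\epsilon^2,0,\epsilon)$, where that edge is heavy and $\opt=2\epsilon^2$. Your Step~1 only spells out more carefully than the paper why $b=\epsilon^2\le a+d$ holds along the whole $\epsilon^2$-path, namely because $d=\epsilon$ already includes the backtracking boundary edge.
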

    \begin{proof}[Proof of~\cref{cl:v-visited-before-u}]
        Since the agent moves directly if \(b \leq a+d\), it only remains to show that \(\alg\) starts with the edge of weight \(\epsilon^2\) with probability at least $1-4\epsilon$.
        If $\alg$ begins by traversing the edge of weight~$\epsilon$ with probability $p$, and the edge of weight $\epsilon^2$ with probability $1-p$, this is also the case for the cycle $\Gamma=(\epsilon^2,0,\epsilon)$.
        Note that the edge of weight $\epsilon$ is a heavy edge for $\Gamma$ as $\epsilon^2<\epsilon$ (where we used $\epsilon<1$).
        We obtain
        \begin{equation*}
            2\geq \frac{\EE[\alg(\Gamma)]}{\opt(\Gamma)}\ge\frac{(1-p)\cdot 2 \epsilon^2+p\cdot (\epsilon + \epsilon^2)}{2 \epsilon^2}\geq \frac{p\epsilon}{2\epsilon^2}=\frac{p}{2\epsilon}.
        \end{equation*}
        Multiplying the inequality with $2\epsilon$ yields $p\leq 4 \epsilon$, which completes the proof of the claim.
    \end{proof}
    Assume for now that the agent is standing at vertex \(v_i\) and has not visited \(u_i\) yet. Let \(p_i\) be the probability with which the agent moves directly, i.e., traverses the \((1+\sqrt{2})\)-edge, in this setting. Note that \(p_1=p_2\) as we assume that \(u_i\) wasn't visited by the agent yet and hence, \(\alg\) has no knowledge about any difference in \(\Gamma_1\) and \(\Gamma_2.\) In the following, we denote this probability of moving directly by~\(p.\)
    
    First of all, note that \(\opt(\Gamma_1)=2+2\epsilon\) and \(\opt(\Gamma_2)=4+2\sqrt{2}+\epsilon.\) Further, we have
    \[\EE[\alg(\Gamma_1) \; | \; v_1 \text{ visited before } u_1]\geq p(2+\sqrt{2}+\epsilon)+(1-p)(2+2\epsilon)=2+2\epsilon+p\left(\sqrt{2}-\epsilon\right),\] 
    and 
    \begin{align*}
        \EE[\alg(\Gamma_2) \; | \; v_2 \text{ visited before } u_2] &\geq p(4+2\sqrt{2}+\epsilon)+(1-p)(6+2\sqrt{2}+\epsilon) \\
        & = 4+2\sqrt{2}+\epsilon+2(1-p).
    \end{align*}
    Let \(A_i\) be the event that the agent of \(\alg\) visits \(v_i\) before \(u_i\). By~\cref{cl:v-visited-before-u} we can assume that the probability for \(A_i\) is at least \(1-4\epsilon,\) as otherwise \(\alg\) is not even 2-competitive.
    Thus, for the competitive ratios we obtain
    \begin{align*}
        \EE\left[\frac{\alg(\Gamma_1)}{\opt(\Gamma_1)}\right] &=  \Pr\left[A_1\right] \cdot \EE\left[\frac{\alg(\Gamma_1)}{\opt(\Gamma_1)} \; \middle| \; A_1\right] + \Pr\left[\overline{A_1}\right] \cdot \EE\left[\frac{\alg(\Gamma_1)}{\opt(\Gamma_1)} \; \middle| \; \overline{A_1}\right] \\
        &\geq (1-4\epsilon)\cdot \left(1+p\,\frac{\sqrt{2}-\epsilon}{2+2\epsilon}\right),
    \end{align*}
    and analogously
    \[\EE\left[\frac{\alg(\Gamma_2)}{\opt(\Gamma_2)}\right]\geq (1-4\epsilon)\cdot\left(1 + (1-p)\frac{2}{4+2\sqrt{2}+\epsilon}\right).
    \]
    Hence, the competitive ratio of \(\alg\) is at least 
    \[\max\left\{1+(1-4\epsilon)\cdot \left(p\,\frac{\sqrt{2}-\epsilon}{2+2\epsilon}\right), \; 1+(1-4\epsilon)\cdot\left((1-p)\frac{2}{4+2\sqrt{2}+\epsilon}\right)\right\}.\]
    Since this holds for all \(\epsilon>0\) (with \(1/\epsilon^2\in \N\)), we obtain for \(\epsilon \to 0\) that the competitive ratio of \(\alg\) is at least~\(1+\max\left\{\frac{p}{\sqrt{2}}, \frac{1-p}{2+\sqrt{2}}\right\}\). As one term in this maximum is increasing in \(p\), while the other one is decreasing in \(p\), this maximum is minimized for the \(p\) such that \(\frac{p}{\sqrt{2}}=\frac{1-p}{2+\sqrt{2}},\) i.e.,~for~\(p=\frac{1}{2+\sqrt{2}}.\) Thus, the competitive ratio of \(\alg\) is at least 
    \[1+\max\left\{\frac{p}{\sqrt{2}}, \frac{1-p}{2+\sqrt{2}}\right\} 
    \geq 1+\max\left\{\frac{1}{\sqrt{2}\left(2+\sqrt{2}\right)}, \frac{1+\sqrt{2}}{\left(2+\sqrt{2}\right)^2}\right\}
    =\frac{1+\sqrt{2}}{2}. \qedhere\]
\end{proof}


\section{A simplified optimal deterministic algorithm}
\label{sec:simpledet}

In this section, we prove that our simpler deterministic algorithm \simpledet has competitive ratio~$\frac{\sqrt{3}+1}{2}$ (\cref{thm:simpledet}). By the result in~\cite{miyazaki}, this ratio is optimal for deterministic algorithms. Recall that in \simpledet the agent moves directly if and only if \(b\leq \sqrt{3}\,a+d\). In fact, we prove the following slightly more general statement.

\begin{theorem}
\label{thm:det-bound-comp-rat}
Let \alg be an algorithm (deterministic or randomized) for online graph exploration on cycles that satisfies the following properties for some \(\alpha>1\) and \(\beta \geq \alpha\):
\begin{enumerate}[label=(\roman*)]
    \item \label{it:suf-cond-direct} Whenever \(b \leq \alpha a +d\), then the agent moves directly.
    \item \label{it:suf-cond-backtr} Whenever \(b > \beta a + d\), then the agent backtracks.
\end{enumerate}
Also assume that $\alg$ initially traverses the lighter of the two visible edges, and that after visiting the last
vertex it returns to the
starting vertex in an optimal fashion.
Then the competitive ratio of~\alg is at most \(\max\left\{1+\frac{1}{1+\alpha},\frac{1+\beta}{2}\right\}\).
\end{theorem}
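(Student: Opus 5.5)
The proof is pathwise: I would bound the cost of every realization, which covers randomized algorithms as well. It rests on an invariant on the quantity $C-a$, tracked along the exploration with the notation $a,b,d,C$ from \cref{sec:strategy}. A direct move gives $C'=C+b$, $a'=a+b$, $d'=d$, so $C-a$ does not change. A backtrack gives $C'=C+a+d$, $a'=d$, $d'=a+b$, so $C-a$ increases by exactly $2a$. Write $\gamma:=\frac{1}{1+\alpha}$. The candidate invariant is
\[
C_t \le a_t + 2\gamma\, d_t \qquad\text{for all } t\ge 1.
\]
It holds at $t=1$, where $C_1=a_1$. Direct moves preserve it trivially. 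For a backtrack, property~\ref{it:suf-cond-direct} forces $b>\alpha a+d$, so we must check $C+a+d\le d+2\gamma(a+b)$. By the invariant it suffices that $2a+2\gamma d\le 2\gamma a+2\gamma b$. Using $b>\alpha a+d$, this reduces to $2a\le 2\gamma(1+\alpha)a$, which is exactly the choice of $\gamma$.

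\textbf{No heavy edge.} Here $\opt(\Gamma)=a+d$ at the final state, and $\alg(\Gamma)=C+\min(a,d)\le a+2\gamma d+\min(a,d)$. I would split on whether $d\le a$ or $a<d$; in both cases an elementary comparison gives $a+2\gamma d+\min(a,d)\le(1+\gamma)(a+d)$. This yields the bound $1+\frac{1}{1+\alpha}$. The first step (traversing the lighter edge) only affects the base case.

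\textbf{Heavy edge $e^*$.} I would consider the time $\Tstop$ at which $e^*$ is discovered, with $l$ the undiscovered weight, so that $\opt(\Gamma)=2(a+d+l)$ and $b=w(e^*)>a+d+l$. After this step the algorithm behaves as in \cref{obs:tstop}, since property~\ref{it:suf-cond-direct} with $\alpha>1$ covers all states with $b\le a+d$. If the agent backtracks, the total cost is $C+a+2d+2l\le 2(a+d+l)+2\gamma d$, giving ratio at most $1+\gamma$. If it moves directly, property~\ref{it:suf-cond-backtr} gives $b\le\beta a+d$, and the cost is $C+b+l+d$. Combining $b\le\beta a+d$ with $b>a+d+l$ should produce the term $\frac{1+\beta}{2}$.

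\textbf{Main obstacle.} The direct-move subcase is where I expect trouble. With the crude invariant $C\le a+2\gamma d$, the excess over $\opt$ is $(\beta-1)a+2\gamma d-l$, and this only fits under $\frac{\beta-1}{2}\opt$ when $2\gamma\le\beta-1$. So the invariant has to be sharpened. First I would try to keep track of how much of $d$ actually came from backtracks: replace $2\gamma d$ by $2\gamma\min(d,\cdot)$, or by a bound involving the last backtrack's $a$, so that $C-a$ is controlled by both $d$ and $a$. I would then redo the three checks (backtrack step, final return, heavy-edge direct step) with the refined invariant. If that fails, I would fall back on an amortized argument that charges each backtrack's extra $2a$ to the edge of weight $b>\alpha a+d$ which forced it, since that edge cannot be the heavy edge unless the stopping step occurs.
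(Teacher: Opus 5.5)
Your invariant $C\le a+2\gamma d$ is exactly the paper's \cref{lem:cstar-bound-gen}, and your no-heavy-edge case and heavy-edge backtrack subcase match the paper's proof. The gap is the heavy-edge direct subcase, which you leave open because you try to fit it under $\frac{1+\beta}{2}$ alone. The theorem only claims the maximum, so this subcase may use either term. Your crude invariant then already suffices, by comparing the coefficients of $a$, $d$, $l$ separately:
\[
\alg(\Gamma)-\opt(\Gamma)\le(\beta-1)a+2\gamma d-l\le 2M(a+d+l)=M\cdot\opt(\Gamma),\qquad M:=\max\Bigl\{\gamma,\tfrac{\beta-1}{2}\Bigr\}.
\]
This gives exactly the claimed bound $1+M$. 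It is how the paper finishes: it drops $-l$ and bounds $\frac{2\gamma d+(\beta-1)a}{2(a+d)}$ by the larger coefficient ratio. No sharpening or amortization is needed, and $b>a+d+l$ is only used to identify $\opt$ and the return path.

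The sharpened goal you set is in fact false, so that route would have failed.
\begin{itemize}
\item Give the edges at $s$ weights $\epsilon$ and $2\epsilon$. Follow the $\epsilon$-edge by about $A/\epsilon$ further edges of weight $\epsilon$; (i) forces all of these moves to be direct.
\item Next place an edge slightly heavier than $\beta A+2\epsilon$; (ii) forces a backtrack. Afterwards $a=2\epsilon$, $d\approx(1+\beta)A$ and $C-a=2A$.
\item Let the next edge close the cycle ($l=0$) with weight in $(a+d,\alpha a+d]$. It is heavy, and (i) forces the direct move.
\end{itemize}
As $\epsilon\to0$, $\alg(\Gamma)/\opt(\Gamma)\to\frac{(4+2\beta)A}{2(1+\beta)A}=1+\frac{1}{1+\beta}$, which exceeds $\frac{1+\beta}{2}$ whenever $\beta<\sqrt3$. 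So in this subcase the $d$-contribution genuinely must be charged to the $1+\frac{1}{1+\alpha}$ term.

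Separately, you omit the case of a heavy edge incident to $s$, where $a,b,d$ are undefined at $\Tstop=0$. This case, not the base case of your invariant (which holds for any first move), is where the lighter-first assumption is needed. The heavy edge becomes the backtracking boundary edge, so $d\ge w(e^*)>b$ at every later step. Then (i) forbids backtracking, and the ratio is $1$.
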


We first note in the following corollary that this result indeed implies \cref{thm:simpledet}.

\begin{corollary}
    The deterministic algorithm \simpledet has competitive ratio $\frac{\sqrt{3}+1}{2}$.
\end{corollary}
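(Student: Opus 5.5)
We will obtain the corollary from \cref{thm:det-bound-comp-rat} for the upper bound, and from the known deterministic lower bound of~\cite{miyazaki} for optimality. \simpledet moves directly if and only if $b\le \sqrt{3}\,a+d$. We therefore set $\alpha=\beta=\sqrt{3}$ and check the hypotheses one by one. We have $\alpha=\sqrt3>1$ and $\beta\ge\alpha$. Property~\ref{it:suf-cond-direct} holds by \cref{def:simpledet}. Property~\ref{it:suf-cond-backtr} also holds, since \simpledet backtracks exactly when $b>\sqrt3\,a+d=\beta a+d$.

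The remaining hypotheses are fixed by the conventions of \cref{sec:algorithms}. There it is stipulated that the first edge chosen is the lighter of the two visible ones. It is also stipulated that, once the last vertex is visited, the agent returns to $s$ along a shortest path, which is an optimal return. So \cref{thm:det-bound-comp-rat} applies.

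Next we evaluate the bound. Since $\frac{1}{1+\sqrt3}=\frac{\sqrt3-1}{2}$, we get
\[
1+\frac{1}{1+\alpha}=1+\frac{\sqrt3-1}{2}=\frac{\sqrt3+1}{2}.
\]
The second term is $\frac{1+\beta}{2}=\frac{1+\sqrt3}{2}$ as well. Hence the maximum is $\frac{\sqrt3+1}{2}$. This shows the competitive ratio of \simpledet is at most this value; indeed the choice $\alpha=\sqrt3$ is exactly the one that balances the two terms.

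For the matching lower bound, we invoke the result of Miyazaki, Morimoto, and Okabe~\cite{miyazaki}: no deterministic algorithm for cycle exploration has competitive ratio below $\frac{1+\sqrt3}{2}$. This applies in particular to \simpledet, so its competitive ratio is exactly $\frac{\sqrt3+1}{2}$.

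There is no real obstacle here. The one point to be careful about is that the tie-breaking conventions (the first step, and the return after the last vertex) are part of the definition of \simpledet, so that the extra hypotheses of \cref{thm:det-bound-comp-rat} are genuinely satisfied rather than assumed.
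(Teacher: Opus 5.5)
Your proposal is correct and follows the paper's proof: apply \cref{thm:det-bound-comp-rat} with $\alpha=\beta=\sqrt{3}$ and check that both terms of the maximum equal $\frac{\sqrt{3}+1}{2}$. Your appeal to the lower bound of~\cite{miyazaki} for exactness matches the paper, which cites that result in the surrounding text rather than inside the corollary's proof.
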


\begin{proof}
    The algorithm \simpledet satisfies the properties from \cref{thm:det-bound-comp-rat} for \(\alpha=\beta=\sqrt{3}.\) Thus, the competitive ratio of \simpledet is at most
    \[\max\left\{1+\frac{1}{1+\alpha},\frac{1+\beta}{2}\right\} = \max\left\{1+\frac{1}{1+\sqrt{3}},\frac{1+\sqrt{3}}{2}\right\}=\frac{\sqrt{3}+1}{2}.\qedhere\]
\end{proof}

In the following, let $C$ be the total cost incurred by the algorithm so far.
The following lemma was proved in~\cite{miyazaki} for their algorithm \dist and is also a key ingredient for our analysis.

\begin{lemma}
\label{lem:cstar-bound-gen}
Let \alg be an algorithm satisfying condition~\ref{it:suf-cond-direct} of \cref{thm:det-bound-comp-rat} for some \(\alpha>1\). Then~\(C -a\leq \frac{2}{1+\alpha}d\) at every point in time up to (and including) the discovery of the last vertex.
\end{lemma}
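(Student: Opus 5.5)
The plan is to prove the inequality as a pathwise invariant by induction over the exploration steps. Set $k:=\frac{2}{1+\alpha}$ and consider the quantity $C-a-kd$; I will show it is non-positive after the first traversal and that no later exploration step can make it positive. No probability is needed: the claim holds along every realization of a randomized algorithm, since condition~\ref{it:suf-cond-direct} constrains the algorithm deterministically.

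\textbf{Base case.} At time $t=1$ the agent has traversed exactly one edge, starting from $s$, so $C_1=a_1$. Also $d_1\ge 0$. Hence $C_1-a_1=0\le k d_1$. Which of the two starting edges was chosen does not matter here.

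\textbf{Induction step.} Suppose the state at time $t$ (before the last vertex is discovered) satisfies $C-a\le kd$. I use the two update rules already recorded in the proof of \cref{lem:supermartingale}.
\begin{itemize}
\item \emph{Direct move:} $C'=C+b$, $a'=a+b$, $d'=d$. Then $C'-a'=C-a\le kd=kd'$, so the invariant is trivially preserved.
\item \emph{Backtracking:} $C'=C+a+d$, $a'=d$, $d'=a+b$. Then $C'-a'=C+a$, and the target inequality is $C-a+2a\le k(a+b)$.
\end{itemize}
The backtracking case is the only place where condition~\ref{it:suf-cond-direct} enters. By contraposition, backtracking can only happen when $b>\alpha a+d$. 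Combining this with the induction hypothesis gives
\[
k(a+b) \ge k\bigl((1+\alpha)a+d\bigr) = 2a+kd \ge 2a + (C-a),
\]
which is exactly the required inequality.

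\textbf{Main point to check.} The only delicate part is consistency with the paper's convention for $a$ and $d$ when the last vertex is discovered, since the lemma explicitly includes that moment. When the move reaches the final vertex $v$, the "unexplored endpoint of the backtracking edge" is $v$ itself. With the convention that $a$ and $d$ are the lengths of the two paths from $v$ to $s$, the same update formulas hold:
\begin{itemize}
\item after a direct move, $a'=a+b$ and $d'=d$;
\item after backtracking, $a'=d$ and $d'=a+b$.
\end{itemize}
So the induction step also covers that final step, and the invariant holds up to and including the discovery of the last vertex.
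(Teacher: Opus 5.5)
Your proof is correct and is essentially the paper's argument: induction from $C_1=a_1$, with direct moves leaving both $C-a$ and $d$ unchanged, and backtracking handled via the forced inequality $b>\alpha a+d$. The only difference is cosmetic: you check the backtracking step additively, $k(a+b)>2a+kd\ge C+a$, whereas the paper compares increments via $\Delta(C-a)/\Delta d=\frac{2a}{a+b-d}<\frac{2}{1+\alpha}$; your version also avoids the harmless $0/0$ that appears in the paper's chain when $a=0$.
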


\begin{proof}
    We prove this by induction. After the first step, \(C=a\) and the inequality holds true. In case the agent moves directly, the value of \(C-a\) and the value of \(d\) remain unchanged. Thus, by induction hypothesis the inequality holds. In case the agent backtracks, we have \(b > \alpha a +d\) by condition~\ref{it:suf-cond-direct}. 
    Moreover, the change in the variables $a,C,d$ is $\Delta a= d-a, \Delta C=a+d, \Delta d=a+b-d$.
    Thus, we obtain for the change in \(C-a\) and the change in \(d\) that
    \[\frac{\Delta (C-a)}{\Delta d} = \frac{2a}{a+b-d} < \frac{2a}{a + \alpha a} = \frac{2}{1+\alpha}.\]
    By induction hypothesis, the asserted inequality holds after the move.
\end{proof}

Now we have all the prerequisites at hand to prove the main result of this section.

\begin{proof}[Proof of~\cref{thm:det-bound-comp-rat}]
Let \alg be an algorithm that satisfies 
the assumptions of the theorem. Let~\(\Gamma\) be a cycle. We split the analysis in two cases. First, assume that the cycle \(\Gamma\) has no heavy edge. Consider the time when the last vertex is visited, see \cref{fig:explore-done} for an illustration.
Let \(C, \, a,\) and~\(d\) be the respective values at this time. Then,
\begin{align*}
    \frac{\alg(\Gamma)}{\opt(\Gamma)}=\frac{C+\min\{a,d\}}{a+d}
    \overset{\text{Lem.}\ref{lem:cstar-bound-gen}}{\leq} \frac{\frac{2}{1+\alpha}d + a + \min\{a,d\}}{a+d}
    \overset{(*)}{\leq} \frac{\left(2+\frac{2}{1+\alpha}\right)a}{2a}
    = 1+ \frac{1}{1+\alpha},\nonumber
\end{align*}
where the inequality $(*)$ holds
since the expression 
on the left hand side
is, for fixed $a$, increasing in $d$
for $d\le a$ and decreasing in $d$ for $d\ge a$ (where we used $\alpha\geq 1$), i.e., it is maximized for $d=a$.

Now, we assume that the cycle has a heavy edge. If the heavy edge is visible at the beginning of the algorithm, then by the assumption that the algorithm initially traverses the lighter edge, the heavy edge will never be traversed and the competitive ratio is 1.

Otherwise, consider the moment \alg encounters the heavy edge for the first time, i.e., \(b\) is the heavy edge. 
Let \(l\) be the sum of the weights of all remaining edges not covered by \(a,b\), or \(d\), see \cref{fig:explore-mid}. Then, \(\opt = 2(a+d+l).\) Assume that the agent backtracks when encountering the heavy edge. Since after encountering the heavy edge, the algorithm never backtracks again by condition~\ref{it:suf-cond-direct}, the cost of the algorithm is \(C +a+2d+2l\). Together with \cref{lem:cstar-bound-gen} we obtain
\[\alg(\Gamma) = C +a+2d+2l \leq \left(\frac{2}{1+\alpha}\right)d + 2a + 2d+ 2l.\]
Hence, we have
\[\frac{\alg(\Gamma)}{\opt(\Gamma)} = \frac{\left(\frac{2}{1+\alpha}\right)d+2a+2d+2l}{2a+2d+2l} \leq 1+\frac{\left(\frac{2}{1+\alpha}\right)d}{2d} = 1+ \frac{1}{1+\alpha}.\]
It remains to consider the case in which the agent chooses to traverse the heavy edge. In this case, by condition~\ref{it:suf-cond-backtr}, we have \(b \leq \beta a + d.\)
After traversing the heavy edge, the agent never backtracks again by condition~\ref{it:suf-cond-direct}. Thus, the cost of the algorithm is \(C+b+l+d\) and we obtain with \cref{lem:cstar-bound-gen} that
\begin{align*}
    \alg(\Gamma) &= C+b+l+d 
    \leq \left(\frac{2}{1+\alpha}\right)d+a+\beta a+d+l+d
    = \left(\frac{2}{1+\alpha} +2 \right)d+(1+\beta )a+l.
\end{align*}
Hence, we have 
\begin{align*}
    \frac{\alg(\Gamma)}{\opt(\Gamma)} & =\frac{\left(\frac{2}{1+\alpha} +2 \right)d+(1+\beta )a+l}{2(a+d+l)}
    =1+ \frac{\frac{2}{1+\alpha}d+(\beta-1 )a-l}{2(a+d+l)}
    \leq 1+\frac{\frac{2}{1+\alpha}d+(\beta-1)a}{2(a+d)} \\
    & \le 1+\max\left\{\frac{1}{1+\alpha},\frac{\beta-1}{2}\right\}= \max\left\{1+\frac{1}{1+\alpha},\frac{1+\beta}{2}\right\}.\qedhere
\end{align*}
\end{proof}


\section{Concluding remarks}

In this paper, we gave the first provable advantage of randomization in the online graph exploration model introduced by Kalyanasundaram and Pruhs. 
We believe that extending this result beyond cycles and understanding whether randomization can help for more complex classes of graphs is a challenging but promising direction for further research.

For cycles, it remains to close the gap between our upper and lower bounds on the competitive ratio of randomized algorithms.
Even though several parts of the analysis of the algorithm $\distrand_{\alpha}$ in \cref{thm:distrand} seem ``loose'', we do not know whether the analysis can be improved. 
Observe that the value of $r$ is tight for our proof structure and choice of potential function, as the very last sequence of inequalities in the proof of \cref{lem:supermartingale} is tight. New ideas might be needed to further improve on the analysis (if possible), potentially exploiting the adversary being oblivious.

It is also unclear what the optimal competitive ratio of $\distrand_{\alpha}$ and, more generally randomized cycle exploration is.
The forward-greedy lower bound of $(1+\sqrt{2})/2$ seems like a natural candidate.
However, since (for any choice of $\alpha$) we found small cycles on which the competitive ratio of $\distrand_{\alpha}$ exceeds this value, that would
require a different algorithm.

Indeed, it is unclear whether $\distrand_{\alpha}$ is optimal for the problem.
Finding this algorithm was already a difficult task and we remark that, during the search for this algorithm, we investigated several natural candidates and, for most of them, we found lower bounds of more than 1.366 via a computer search.
Last, we also leave open whether an optimal algorithm for this problem should even be forward-greedy. While the agent can be sure when encountering a state with $b\leq a+d$ that the offline optimum traverses $b$, perhaps adding randomness in such decisions can make the behavior more unpredictable and thus more difficult for the adversary.

\paragraph{Acknowledgments.}
We thank Christian Coester,  Yann Disser, Donald Kougang Yombi, Pascal Schweitzer, and Sergio Tinaharimanjaka for valuable conversations related to this project.
Part of this work was carried out during research visits supported by AvH German Research Chair funding.

\bibliographystyle{alpha}
\bibliography{RandomizedOnlineCycleExploration}

\end{document}